\def\bb{\boldsymbol{b}}
\def\br{\boldsymbol{r}}
\def\bA{\boldsymbol{A}}
\def\bR{\boldsymbol{R}}
\def\bGamma{\boldsymbol{\Gamma}}
\def\bepsilon{\boldsymbol{\epsilon}}
\def\blambda{\boldsymbol{\lambda}}
\def\bmu{\boldsymbol{\mu}}
\def\bSigma{\boldsymbol{\Sigma}}
\def\btau{\boldsymbol{\tau}}
\def\bOmega{\boldsymbol{\Omega}}
\def\mba{\mathbf{a}}
\def\mbb{\mathbf{b}}
\def\mbc{\mathbf{c}}
\def\mbh{\mathbf{h}}
\def\mbp{\mathbf{p}}
\def\mbr{\mathbf{r}}
\def\mbx{\mathbf{x}}
\def\mbA{\mathbf{A}}
\def\mbB{\mathbf{B}}
\def\mbC{\mathbf{C}}
\def\mbH{\mathbf{H}}
\def\mbI{\mathbf{I}}
\def\mbK{\mathbf{K}}
\def\mbM{\mathbf{M}}
\def\mbO{\mathbf{O}}
\def\mbP{\mathbf{P}}
\def\mbQ{\mathbf{Q}}
\def\mbR{\mathbf{R}}
\def\mbS{\mathbf{S}}
\def\mbU{\mathbf{U}}
\def\mbV{\mathbf{V}}
\def\mbW{\mathbf{W}}
\def\mbX{\mathbf{X}}
\newtheorem{theorem}{Theorem}
\newtheorem{proposition}{Proposition}
\newtheorem{corollary}{Corollary}
\theoremstyle{definition}
\algnewcommand\algorithmicinput{\textbf{Input:}}
\algnewcommand\Input{\item[\algorithmicinput]}
\algnewcommand\algorithmicoutput{\textbf{Output:}}
\algnewcommand\Output{\item[\algorithmicoutput]}
\algnewcommand\algorithmicinit{\textbf{Initialize:}}
\algnewcommand\Init{\item[\algorithmicinit]}
\newcommand*{\rom}[1]{\expandafter\@slowromancap\romannumeral #1@}
\begin{document}

\title{ORKA: Accelerated Kaczmarz Algorithms for Signal Recovery from One-Bit Samples}
\author{Arian Eamaz, \IEEEmembership{Student Member, IEEE}, Farhang Yeganegi, Deanna Needell, \IEEEmembership{Member, IEEE}, and \\ Mojtaba Soltanalian, \IEEEmembership{Senior Member, IEEE}
\thanks{This work was supported in part by National Science Foundation Grant CCF-1704401. The first two authors contributed equally to this work.}
\thanks{A. Eamaz, F. Yeganegi and M. Soltanalian are with the Department of Electrical and Computer Engineering, University of Illinois Chicago, Chicago, IL 60607, USA (\emph{Corresponding author: Arian Eamaz}).
}
\thanks{D. Needell is with the Department of Mathematics, University of California Los Angeles, Los
Angeles, CA 90095 USA.}
}

\markboth{Submitted to the IEEE TRANSACTIONS ON SIGNAL PROCESSING, 2022 
}
{Shell \MakeLowercase{\textit{et al.}}: Bare Demo of IEEEtran.cls for IEEE Journals}
\maketitle

\begin{abstract}
One-bit quantization with time-varying sampling thresholds
has recently found significant utilization potential in statistical signal processing applications due to its relatively low
power consumption and low implementation cost. In addition
to such advantages, an attractive feature of one-bit analog-to-digital converters (ADCs) is their superior sampling rates as compared to their conventional multi-bit counterparts. This characteristic endows one-bit signal processing frameworks with what we refer to as \emph{sample abundance}.  On the other hand, many signal recovery and optimization problems are formulated as (possibly non-convex) quadratic programs with linear feasibility constraints in the one-bit sampling regime. We demonstrate, with a particular focus on the nuclear norm minimization, that the sample abundance paradigm allows for the transformation of such quadratic problems to merely a linear feasibility problem by forming a large-scale overdetermined linear system; thus removing the need for costly optimization constraints and objectives. To make this achievable, we propose enhanced randomized Kaczmarz algorithms to tackle these highly overdetermined feasibility problems. Several numerical results are presented to illustrate the effectiveness of the proposed
methodologies.  
\end{abstract}

\begin{IEEEkeywords}
Convex-relaxed problems, nuclear norm minimization, one-bit quantization, one-bit ADCs, randomized Kaczmarz algorithm, statistical signal processing, time-varying sampling thresholds.
\end{IEEEkeywords}

\IEEEpeerreviewmaketitle

\section{Introduction}
\IEEEPARstart{W}{e} consider an optimization problem of the form
\begin{equation}
\label{eq:1nnnn}
\begin{aligned}
\min_{\mbX}\quad &f(\mbX)\\
\text{s.t.} \quad &\mathcal{A}\left(\mbX\right)=\mathbf{y}, \\
&\mbX \in \Omega_{c},
\end{aligned}
\end{equation}
where $f(.)$ is a cost function, $\mbX\in \mathbb{C}^{n_{1}\times n_{2}}$ is the matrix of unknowns, $\mathbf{y}\in \mathbb{R}^{n}$ is the measurement vector, and $\mathcal{A}$ is a linear transformation mapping $\mathbb{C}^{n_{1}\times n_{2}}$ into $\mathbb{R}^{n}$. 

This problem has been used as a relaxed version of some well-known NP-hard problems, and emerging in wide variety of statistical signal processing applications. Although many problems can be expressed in
the form in (\ref{eq:1nnnn}), the applications we will focus on in this paper
include some specific problems of interest in statistical signal processing, which can take advantage of low-resolution (and particularly one bit) sampling and processing:

\begin{itemize}
    \item \textbf{Low-rank matrix recovery}: The task of recovering a low-rank matrix from its linear measurements plays
a central role in computational science. The problem occurs in many areas of applied mathematics such as signal processing \cite{candes2015phase,cai2010singular,davenport20141,davenport2016overview,candes2011tight,chen2013low,li2019survey}, machine learning \cite{haeffele2014structured,nie2012low,blei2012probabilistic,obozinski2010joint,pontile2007convex,argyriou2006multi}, and computer vision\cite{tomasi1992shape}. In this scenario, the cost function of (\ref{eq:1nnnn}),  $f(.)$, is typically to be the \emph{nuclear norm} or the Frobenius norm, and the constraint set $\Omega_{c}$ would be a amplitude restriction on the elements of matrix $\mbX$; see \cite{recht2011null,cai2010singular}.
\item \textbf{Phase retrieval}:
Phase retrieval has received a great deal of interest as it aims to recover an unknown signal solely from phaseless
measurements that depend on the signal through a linear
observation, commanding numerous applications in applied physics and statistical signal processing communities over the past decades \cite{9896984,millane,kim,Fienup93, Krist95, Sarnik,GS1,GS2,fienup1,fienup2,fienup3,candes2013phaselift, candes2014solving}. To have a convex formulation, the phase retrieval has been relaxed into semi-definite programs where the problem boils down to a trace minimization while considering the positive semi-definite constraint \cite{9896984,candes2015phase}.
\item \textbf{Compressed sensing}: Compressed sensing (CS) offers a framework for simultaneous sensing  and compression of finite dimensional vectors, that relies on linear dimensionality reduction. Through a CS formulation, sparse signals may be recovered from highly incomplete measurements\cite{eldar2012compressed}. The problem (1) can be adopted in the CS content when $f\left(\mbX\right)=\left\|\operatorname{vec}\left(\mbX\right)\right\|_{1}$.
\item \textbf{Magnetic resonance imaging}: Reconstructing magnetic resonance images commonly involves collecting a series of frames of data in which a radio frequency excitation produces new transverse magnetization, which is then sampled along a particular trajectory in $k$-sparse representation. Due to meet various physical and physiological constraints, most MRI methods utilize a sequence of acquisitions, each of which partially samples the representation. Let the acquired sequence of measurements be represented by $y_{i}$, where $i$ is the sequence index, and $\left\{\mathcal{A}_{i}\left(\mbX\right)\right\}$ denote a linear transformation, chosen in a manner that promotes sparsity in the range space. In this example, the cost function can be considered to be the $\ell_{1}$-norm, and the sequence of acquisitions are used as linear constraints in (\ref{eq:1nnnn})\cite{lustig2008compressed,fessler2010model}.
\end{itemize}

Sampling the signals of interest at high data rates with high-resolution ADCs would dramatically increase the overall manufacturing cost and power consumption of such ADCs. In multi-bit sampling scenarios, a very large number of quantization levels is necessary in order to represent the original continuous signal in with high accuracy, which in turn leads to a considerable reduction in sampling rate \cite{eamaz2021modified}. This attribute of multi-bit sampling is the key reason for the general emergence of underdetermined systems $n_{1}n_{2}\geq n$ in (\ref{eq:1nnnn}) \cite{candes2013phaselift,candes2015phase,9896984}. An alternative solution to such challenges is to deploy \emph{one-bit quantization} which is an extreme sampling scenario, where the signals are merely compared with given threshold levels at the ADCs, producing sign data ($\pm1$). This enables signal processing equipments to sample at a very high rate, with a considerably lower cost and energy consumption, compared to their counterparts which employ multi-bit ADCs \cite{instrumentsanalog,mezghani2018blind,eamaz2021modified,sedighi2020one}.

In traditional one-bit sampling schemes, the signal recovery is accomplished by comparing the signal with a fixed threshold, usually zero. This creates some difficulties in estimating signal parameters. In contrast, recent works have employed time-varying sampling thresholds, which exhibit enhanced recovery performance for the signal parameters \cite{eamaz2021modified,qian2017admm,gianelli2016one,9896984,eamaz2022covariance,wang2017angular,xi2020gridless}.

In this paper, we consider the deployment of one-bit sampling with time-varying thresholds,leading to an increased sample size and a \emph{highly overdetermined system} as a result. The proposed \emph{O}ne-bit aided \emph{R}andomized \emph{K}aczmarz \emph{Algorithm}, which we refer to as ORKA, can find the desired signal $\mbX^{\star}$ in (\ref{eq:1nnnn}) by (i) generating abundant one-bit measurements, in order to define a large scale overdetermined system where a finite volume feasible set is created for (\ref{eq:1nnnn}), and (ii) solving this obtained linear feasibility problem by leveraging one of the efficient solver families of overdetermined systems, \emph{Kaczmarz algorithms}. The Kaczmarz method \cite{kaczmarz1937angenaherte} is an iterative projection algorithm for solving linear systems of equations and inequalities. It is usually applied to highly overdetermined systems because of its simplicity.
Each iteration projects onto the solution space corresponding to one row in the linear system, in a sequential regimen. The method has been applied to various applications in image reconstruction, digital signal processing, and
computer tomography \cite{feichtinger1995kaczmarz,sezan1987applications,9896984}. Many variants of this iterative method and their convergence rates have been proposed and studied in recent decades for both consistent and inconsistent systems including the randomized Kaczmarz algorithm, the randomized block Kaczmarz algorithm and most recently, the sampling Kaczmarz-Motzkin method \cite{strohmer2009randomized,leventhal2010randomized,needell2014paved,briskman2015block,de2017sampling}.

\subsection{Contributions of the Paper}

In \cite{9896984}, we showed that the sheer number of measurements acquired in one bit sampling facilitates recovering the signal of interest in a less costly manner by making costly constraints such as semidefiniteness and rank redundant. Then, a simple randomized Kaczmarz algorithm (RKA) was utilized to solve the obtained linear feasibility problem. This idea is generalized in this paper to (\ref{eq:1nnnn}) where we generate the abundant samples and eventually introduce a one-bit linear feasibility region named the \emph{one-bit polyhedron}. In other words, by using this technique, we make (\ref{eq:1nnnn}) a large-scale overdetermined system which is the desired application setting for Kaczmarz algorithms.

To solve our highly overdetermined system, we propose two novel variants of RKA which will be compared with the existing RKA variants. Furthermore, an
algorithm is proposed based on our model to adaptively evaluate the time-varying sampling thresholds. The convergence rate of the proposed algorithm is investigated based on the moments generating function of recovery errors and the scaled condition number of the constraint matrix. Finally, the performance of the proposed method is examined in nuclear norm minimization-based problems.

\subsection{Organization of the Paper}

Section~\rom{2} is dedicated to a review of proximal methods which have been utilized to tackle (\ref{eq:1nnnn}) by projecting the final solution on the desired feasible set. In Section~\rom{3}, we will introduce our algorithm to solve (\ref{eq:1nnnn}), ORKA, which tackles the problem as a large-scale overdetermined system and finds the optimal point in the one-bit polyhedron by an accelerated Kaczmarz approach. Moreover, two new variants of the Kaczmarz algorithms are proposed that enhance the convergence rate and the computational complexity of these solvers. To investigate the convergence rate of ORKA, at first, we will introduce a penalty function in Section~\rom{4} based on the Chernoff bound. Section~\rom{5} discusses an iterative algorithm to achieve optimized time-varying sampling threshold sequences which benefit the signal recovery process with enhanced accuracy. As a representative application, in Section~\rom{6}, ORKA and other proposed algorithms will be applied in the  context of low-rank matrix recovery in the form of a nuclear norm minimization problem. Finally, Section~\rom{9} concludes the paper.

\underline{\emph{Notation:}}
We use bold lowercase letters for vectors and bold uppercase letters for matrices. $\mathbb{C}$ and $\mathbb{R}$ represent the set of complex and real numbers, respectively. $(\cdot)^{\top}$ and $(\cdot)^{\mathrm{H}}$ denote the vector/matrix transpose, and the Hermitian transpose, respectively. $\mbI_{N}\in \mathbb{R}^{N\times N}$ is the identity matrix of size $N$. $\operatorname{Tr}(.)$ denotes the trace of the matrix argument. $\left\langle \mbB_{1},\mbB_{2}\right\rangle=\operatorname{Tr}(\mbB_{1}^{\mathrm{H}}\mbB_{2})$ is the standard inner product between two
matrices. The nuclear norm of a matrix $\mbB\in \mathbb{C}^{N_{1}\times N_{2}}$ is denoted $\left\|\mbB\right\|_{\star}=\sum^{M}_{i=1}\sigma_{i}$ where $M$ and $\left\{\sigma_{i}\right\}$ are the rank and singular values of $\mbB$, respectively. The Frobenius norm of a matrix $\mbB$ is defined as $\|\mbB\|_{\mathrm{F}}=\sqrt{\sum^{N_{1}}_{r=1}\sum^{N_{2}}_{s=1}\left|b_{rs}\right|^{2}}$ where $\{b_{rs}\}$ are elements of $\mbB$. The $\ell^{k}$-norm of a vector $\mathbf{b}$ is defined as $\|\mbb\|^{k}_{k}=\sum_{i}|b|^{k}_{i}$. The Hadamard (element-wise) product of two matrices $\mbB_{1}$ and $\mbB_{2}$ is denoted as $\mbB_{1}\odot \mbB_{2}$. Additionally, the  Kronecker product is denoted as $\mbB_{1}\otimes \mbB_{2}$. The vectorized form of a matrix $\mbB$ is written as $\operatorname{vec}(\mbB)$. $\mathbf{1}_{s}$ is the $s$-dimensional all-one vector. Given a scalar $x$, we define $(x)^{+}$ as $\max\left\{x,0\right\}$. $f\asymp g$ means $f$ and $g$ are asymptotically equal. $\operatorname{Diag}\left\{\mathbf{b}\right\}$ denotes a diagonal matrix with $\{b_{i}\}$ as its diagonal elements. 

\section{Projections On Convex Sets: Dealing With Costly Constraints}
\label{sec_1}
To tackle (\ref{eq:1nnnn}), many non-convex and local optimization algorithms have been developed over the years. Nevertheless, in recent decades, convex programming formulations via \emph{relaxation} have come to the fore to approximate \emph{global} solutions. In the convex framework, various iterative methods have been proposed to tackle the problem with a Lagrangian formulation such as \emph{Uzawa's algorithm} and the proximal forward-backward splitting method (PFBS) \cite{daubechies2004iterative,combettes2005signal,cai2010singular}. Moreover, to keep the problem solution inside the constraint set $\Omega_{c}$, the orthogonal projection $\mathcal{P}_{\Omega_{c}}$ is applied to solutions in each iteration. This process is briefly explained below.

The Lagrangian for (\ref{eq:1nnnn}) is written as \cite{cai2010singular},
\begin{equation}
\label{eq:1}
\mathcal{L}\left(\mbX,\blambda\right)= f\left(\mbX\right)+	\left\langle \blambda,\mathbf{y}-\mathcal{A}\left(\mbX\right)\right\rangle,
\end{equation}
where $\blambda\in \mathbb{R}^{n}$. Uzawa's algorithm aims to find a saddle point $\left(\mbX^{\star},\blambda^{\star}\right)$, where $\sup_{\blambda}\inf_{\mbX} \mathcal{L}\left(\mbX,\blambda\right)=\inf_{\mbX}\sup_{\blambda} \mathcal{L}\left(\mbX,\blambda\right)$, with the iterative procedure:
\begin{equation}
\label{eq:2}
\begin{aligned}
\begin{cases}\mathcal{L}\left(\mbX^{k},\blambda^{k-1}\right)=\min_{\mbX}\mathcal{L}\left(\mbX,\blambda^{k-1}\right), \\ \blambda^{k}=\mathcal{P}_{\Omega_{c}}\left(\blambda^{k-1}+\alpha_{k}\left(\mathbf{y}-\mathcal{A}\left(\mbX^{k}\right)\right)\right),\end{cases}
\end{aligned}
\end{equation}
where $\alpha_{k}$ is the step size. This iterative steps can be rewritten as
\begin{equation}
\label{eq:3}
\begin{aligned}
\begin{cases}\mbX^{k}=\operatorname{Prox}_{f}\left(\mathcal{A}^{\star}\left(\blambda^{k-1}\right)\right), \\ \blambda^{k}=\mathcal{P}_{\Omega_{c}}\left(\blambda^{k-1}+\alpha_{k}\left(\mathbf{y}-\mathcal{A}\left(\mbX^{k}\right)\right)\right),\end{cases}
\end{aligned}
\end{equation}
where $\operatorname{Prox}_{f}$ is the proximal operator minimizing the Lagrangian function, and $\mathcal{A}^{\star}$ is the adjoint of $\mathcal{A}$.

Since every linear equation can be reformulated in standard form, we recast $\mathcal{A}\left(\mbX\right)=\mathbf{y}$ as $\mbA\mathbf{x}=\mathbf{y}$, where $\mbA\in \mathbb{C}^{n\times n_{1}n_{2}}$ is a matrix version of the operator $\mathcal{A}$, and $\mathbf{x}=\operatorname{vec}\left(\mbX\right)$\cite{recht2011null}. 
The optimization problem (\ref{eq:1nnnn}) is equivalently given by \cite{cai2010singular,candes2015phase}
\begin{equation}
\label{eq:4}
\begin{aligned}
\min_{\mbX}\quad &g\left(\mbX\right)=\frac{1}{2}\left\|\mathbf{y}-\mbA\operatorname{vec}\left(\mbX\right)\right\|^{2}_{2}+\lambda f(\mbX) \\
\text{s.t.} \quad
&\mbX \in \Omega_{c}.
\end{aligned}
\end{equation}
To solve this problem, instead of using proximal methods, a projected gradient method such as Nesterov iterative approach may be utilized, i.e., 
$\mbX^{k}=\mathcal{P}_{\Omega_{c}}\left(\mbX^{k-1}-\alpha_{k}\nabla g\left(\mbX^{k-1}\right)\right)$.

Famous examples for $\operatorname{Prox}_{f}$ and $\mathcal{P}_{\Omega_{c}}$, are the singular value thresholding operator (SVT) and the semi-definite orthogonal projector, respectively. SVT is useful when $f\left(\mbX\right)=\left\|\mbX\right\|_{\star}$, mathematically defined as \cite{cai2010singular}:
\begin{equation}
\label{eq:5}
\begin{aligned}
\mathcal{D}_{\delta} = \mbU \operatorname{Diag}\left\{\left(\sigma_{k}-\delta\right)^{+}\right\}\mbV^{\top},
\end{aligned}
\end{equation}
where $\mbU$ and $\mbV$ are unitary matrices from singular value decomposition (SVD), and $\{\sigma_{k}\}$ are the singular values. Furthermore, the semi-definite projector emerges in semi-definite programming where the convex constraint set is a positive semi-definite (PSD) matrix. It compares eigenvalues of the solution in each iteration with zero or a fixed threshold \cite{candes2015phase}, i.e., 
\begin{equation}
\label{eq:5ar}
\begin{aligned}
\mathcal{P}_{\Omega_{c}} = \mbU  \operatorname{Diag}\left\{\left(\lambda_{k}-\delta\right)^{+}\right\}\mbU^{\top},
\end{aligned}
\end{equation}
where $U$ is the unitary matrix coming from the Schur decomposition. In the case of both operators, the approximate solution should be projected onto a feasible convex set at each iteration via recovering all singular values and eigenvalues and comparing their smaller elements with a threshold, which is quite expensive \cite{candes2015phase}. 

An interesting alternative to enforcing the feasible set $\mathcal{F}_{\mbX}=\left\{\operatorname{Prox_{f}\cap \Omega_{c}}\right\}$ in (\ref{eq:1nnnn}) emerges when one increases the number of samples $n$, and solves the overdetermined linear system of equations with $n\geq n_{1}n_{2}$. In this sample abundance regimen, the linear constraint $\mathcal{A}\left( \mbX\right)=\mathbf{y}$ may actually yield the optimum inside $\mathcal{F}_{\mbX}$. As a result of increasing the number of samples, it is possible that the intersection of these hyperplanes will achieve the optimal point without the need to consider costly constraints. However, this idea may face practical limitations in the case of multi-bit quantization systems since ADCs capable of ultra-high rate sampling are difficult and expensive to produce. Moreover, one cannot necessarily expect these constraints to intersect with $\mathcal{F}_{\mbX}$ in such a way to form a finite-volume space before the optimum is obtained \cite{9896984,candes2013phaselift}.

In the next section, by deploying the idea of one-bit sampling with time-varying thresholds, linear equality constraints are superseded by a massive array of linear inequalities in forming the feasible polyhedron. Therefore, by increasing the number of samples, a finite-volume space may be created inside $\mathcal{F}_{\mbX}$ with shrinking size; making projections on $\Omega_{c}$ redundant. \emph{From a practical point of view, one-bit sampling is done efficiently at a very high rate with a significantly lower cost compared to its high-resolution counterpart. It has been examined in \cite{9896984} that even though only partial information is made available to one-bit signal processing algorithms, they can achieve acceptable recovery performance with less complexity compared to the high-resolution scenario. Thus, it is both practical and necessary to study the ground-breaking opportunities that emerge in the context of the wide array of problems formulated as (\ref{eq:1nnnn}) due to the availability of a large number of one-bit samples.}

\section{Proposed Algorithm}
\label{sec_3}
In this section, at first we begin by presenting a summarized review of randomized Kaczmarz algorithms. Then, we propose a novel Kaczmarz method variant formulated based on the \emph{sampling Kaczmarz-Motzkin algorithm} (SKM) and a \emph{preconditioning approach}. One-bit sampling via time-varying thresholds will be combined with the proposed randomized Kaczmarz method to create highly overdetermined linear inequalities. This paves the way for the recovery of the desired signal $\mbX^{\star}$ in (\ref{eq:1nnnn}) without solving the original optimization problem; merely by tacking accounts of its linear constraints. We name our algorithm \emph{O}ne-bit aided \emph{R}andomized \emph{K}aczmarz \emph{A}lgorithm (ORKA). Due to the block structure of the linear feasibility in ORKA, we will propose a block-based Kaczmarz algorithm accordingly.

\subsection{Randomized Kaczmarz Algorithm (RKA)}
\label{sec_RKA}

The randomized Kaczmarz algorithm (RKA) is a \emph{sub-conjugate gradient method} to solve a linear feasibility problem, i.e, $\mbC\mathbf{x}	\preceq\mathbf{b}$ where $\mbC$ is a ${m\times n}$ matrix with $m>n$ \cite{leventhal2010randomized,strohmer2009randomized}. Conjugate-gradient methods immediately turn the mentioned inequality to an equality in the following form:
\begin{equation}
\label{bikhod}
\left(\mbC\mathbf{x}-\mathbf{b}\right)^{+}=0,
\end{equation}
and then, approach the solution by the same process as used for systems of equations. Without any loss of generality,
consider (\ref{bikhod}) to be a polyhedron:
\begin{equation}
\label{eq:21}
\begin{aligned}
\begin{cases}\mbc_{j} \mathbf{x} \leq b_{j} & \left(j \in I_{\leq}\right), \\ \mbc_{j} \mathbf{x}=b_{j} & \left(j \in I_{=}\right),\end{cases}
\end{aligned}
\end{equation}
where the disjoint index sets $I_{\leq}$ and $I_{=}$ partition our sample index set $\mathcal{J}$, and $\{\mbc_{j}\}$ denote the rows of $\mbC$. Based on this problem, the projection coefficient $\beta_{i}$ of the RKA is defined as \cite{leventhal2010randomized,briskman2015block,dai2013randomized}:
\begin{equation}
\label{eq:22}
\beta_{i}= \begin{cases}\left(\mbc_{j} \mathbf{x}_{i}-b_{j}\right)^{+} & \left(j \in I_{\leq}\right), \\ \mbc_{j} \mathbf{x}_{i}-b_{j} & \left(j \in I_{=}\right).\end{cases}
\end{equation}
Also, the unknown column vector $\mathbf{x}$ is iteratively updated as
\begin{equation}
\label{eq:23}
\mathbf{x}_{i+1}=\mathbf{x}_{i}-\frac{\beta_{i}}{\left\|\mbc_{j}\right\|^{2}_{2}} \mbc^{\mathrm{H}}_{j},
\end{equation}
where, at each iteration $i$, the index $j$ is chosen independently at random from the set $\mathcal{J}$, following the distribution
\begin{equation}
\label{eq:24}
P\{j=k\}=\frac{\left\|\mbc_{k}\right\|^{2}_{2}}{\|\mbC\|_{\mathrm{F}}^{2}}.
\end{equation}

If the system (\ref{eq:21}) is consistent and its feasible region is nonempty, RKA converges linearly in expectation \cite{strohmer2009randomized,leventhal2010randomized}:
\begin{equation}
\label{eq:15}
\mathbb{E}\left\{\hbar\left(\mathbf{x}_{i},\mathbf{x}^{\star}\right)\right\} \leq q^{i}~ \hbar\left(\mathbf{x}_{0},\mathbf{x}^{\star}\right),
\end{equation}
where $\hbar\left(\mathbf{x}_{i},\mathbf{x}^{\star}\right)=\left\|\mathbf{x}_{i}-\mathbf{x}^{\star}\right\|_{2}^{2}$ is the distance function between two points in the space, $\mathbf{x}^{\star}$ is a desired point, $i$ is the number of required iterations for RKA, and $q \in \left(0,1\right)$ is given as
\begin{equation}
\label{eq:16}
q=1-\frac{1}{\kappa^{2}\left(\mbC\right)},
\end{equation}
with $\kappa\left(\mbC\right)=\|\mbC\|_{\mathrm{F}}\|\mbC^{\dagger}\|_{2}$ denoting the scaled condition number.
\subsection{Sampling Kaczmarz-Motzkin Algorithm (SKM)}
\label{sec_SKM}
The SKM combines the ideas of both the RKA and the Motzkin method. Its generalized convergence theorem, and a validation of feasibility, which has been formulated based on the convergence analysis of RKA and sampling Motzkin method for solving linear feasibility problem have been fully explored in \cite{de2017sampling}. 

The central contribution of SKM lies in its innovative way of projection plane selection. The hyperplane selection is done as follows. At iteration $i$ the SKM algorithm selects a collection of $\gamma$ (denoted by the set $\tau_{i}$), uniformly at random out of $m$ rows of the constraint matrix $\mbC$. Then, out of these $\gamma$ rows, the row with maximum positive residual is selected.  Finally, the solution is updated as \cite{de2017sampling,sarowar2020sampling}:
\begin{equation}
\label{eq:230}
\mathbf{x}_{i+1}=\mathbf{x}_{i}-\lambda_{i}\frac{\beta_{i}}{\left\|\mbc_{j^\star}\right\|^{2}_{2}} \mbc^{\mathrm{H}}_{j^{\star}},
\end{equation}
where $j^{\star}=\operatorname{argmax}~\left\{ \left(\mbc_{j} \mathbf{x}_{i}-b_{j}\right)^{+}\right\},~j\in\tau_{i}$, and $\lambda_{i}$ is a relaxation parameter which for consistent systems must satisfy \cite{strohmer2009randomized},
\begin{equation}
\label{relaxation} 
0\leq \lim_{i\to\infty} \inf \lambda_{i}\leq \lim_{i\to\infty} \sup \lambda_{i}<2,
\end{equation}
to ensure convergence. 
The convergence bound for SKM is given by
\begin{equation}
\label{eq:150}
\mathbb{E}\left\{\hbar\left(\mathbf{x}_{i},\mathbf{x}^{\star}\right)\right\} \leq \left(1-\frac{2\lambda_{i}-\lambda^{2}_{i}}{\kappa^{2}\left(\mbC\right)}\right)^{i}~ \hbar\left(\mathbf{x}_{0},\mathbf{x}^{\star}\right).
\end{equation}
In the case where the constraint matrix is normalized, i.e. $\|\mbc_{j}\|^{2}_{2}=1$, $s_{i}$ is the number of satisfied constraints after iteration $i$, and $L_{i}=\max\left\{m-s_{i},m-\gamma\right\}$, for the $(i+1)$th iteration we have \cite{de2017sampling},
\begin{equation}
\label{eq:1500}
\mathbb{E}\left\{\hbar\left(\mathbf{x}_{i},\mathbf{x}^{\star}\right)\right\} \leq \left(1-\frac{\sigma^{2}_{min}\left(2\lambda_{i}-\lambda^{2}_{i}\right)}{V_{i}}\right)^{i}~ \hbar\left(\mathbf{x}_{0},\mathbf{x}^{\star}\right).
\end{equation}
This recovery error bound is tighter than (\ref{eq:150}).
\subsection{Our Contribution: Preconditioned SKM (PrSKM)}
\label{sec_preSKM}
According to the convergence rate formula of RKA, if we can reduce the value of the scaled condition number, the convergence is accelerated, and the upper bound of the recovery error $\mathbb{E}\left\{\left\|\mathbf{x}_{i}-\mathbf{x}^{\star}\right\|_{2}^{2}\right\}$ decreases. Moreover, by having a lower value of $q$, a lower number of iterations is required to achieve a specific recovery error bound, usually considered to be the algorithm's termination criterion. Consequently, let $I$ is the number of iterations, the computational cost of RKA which behaves as $\mathcal{O}\left(I n\right)$, is diminished as well. To make this happen, one can start from reducing $q=1-\frac{1}{\kappa^{2}\left(\mbC\right)}$ which occurs when the scaled condition number $\kappa\left(\mbC\right)$ is diminished; a condition that can be satisfied by considering the following theorem.
\begin{theorem}
\label{scaled_number}
The infimum scaled condition number of a matrix $\mbC\in\mathbb{R}^{m\times n}$ is given by
\begin{equation}
\label{eq:Arian}
\inf_{\mbC} \kappa\left(\mbC\right) = \sqrt{n},
\end{equation}
which is achieved if and only if $\mbC$ is of the form $\mbC=\alpha \mbU$, where $\mbU$ is an orthonormal matrix and $\alpha\in\mathbb{R}$ is a scalar.
\end{theorem}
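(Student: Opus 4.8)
The plan is to reduce everything to the singular values of $\mbC$ and then apply one elementary inequality. Throughout I would work under the (implicit) hypothesis that $\mbC$ has full column rank $n$, which forces $m\geq n$. This is the regime in which the statement is meant: a rank-deficient $\mbC$ makes $\mbC^{\dagger}$ fail to be a genuine left inverse, and in fact would drive the scaled condition number strictly below $\sqrt{n}$, so the full-rank assumption is exactly what pins the infimum at $\sqrt{n}$. Writing the singular values of $\mbC$ as $\sigma_{1}\geq \cdots \geq \sigma_{n}>0$, the two factors of $\kappa(\mbC)=\|\mbC\|_{\mathrm{F}}\|\mbC^{\dagger}\|_{2}$ become spectral: $\|\mbC\|_{\mathrm{F}}=\sqrt{\sum_{i=1}^{n}\sigma_{i}^{2}}$, and $\|\mbC^{\dagger}\|_{2}=1/\sigma_{n}$ since the largest singular value of the pseudoinverse is the reciprocal of the smallest singular value of $\mbC$.

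First I would square the condition number to obtain $\kappa^{2}(\mbC)=\frac{1}{\sigma_{n}^{2}}\sum_{i=1}^{n}\sigma_{i}^{2}$. Because $\sigma_{i}\geq\sigma_{n}$ for every $i$, each term of the sum is at least $\sigma_{n}^{2}$, whence $\sum_{i=1}^{n}\sigma_{i}^{2}\geq n\,\sigma_{n}^{2}$ and therefore $\kappa^{2}(\mbC)\geq n$, i.e. $\kappa(\mbC)\geq\sqrt{n}$. This settles the lower bound, which is the easy half.

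Next I would extract the equality condition. Equality in $\sum_{i=1}^{n}\sigma_{i}^{2}\geq n\,\sigma_{n}^{2}$ forces $\sigma_{i}=\sigma_{n}$ for all $i$; that is, all singular values coincide, say $\sigma_{i}=\alpha$ for a common $\alpha>0$. Substituting back through the SVD $\mbC=\mbU_{1}\Diag{\sigma_{i}}\mbV^{\top}$, with $\mbU_{1}\in\mathbb{R}^{m\times n}$ having orthonormal columns and $\mbV\in\mathbb{R}^{n\times n}$ orthogonal, gives $\mbC=\alpha\,\mbU_{1}\mbV^{\top}$. Setting $\mbU=\mbU_{1}\mbV^{\top}$, one checks $\mbU^{\top}\mbU=\mbI_{n}$, so $\mbU$ is orthonormal and $\mbC=\alpha\mbU$, as claimed. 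The converse is immediate: if $\mbC=\alpha\mbU$ with $\mbU^{\top}\mbU=\mbI_{n}$, then every singular value equals $|\alpha|$, the inequality chain is tight, and $\kappa(\mbC)=\sqrt{n}$.

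I do not anticipate a genuine obstacle in the calculation, since it is essentially a one-line consequence of the SVD once the two norms are expressed spectrally. The only point that truly needs care — and which I would flag as a standing hypothesis rather than bury — is the full-column-rank assumption: without it, $\|\mbC\|_{\mathrm{F}}$ sees only the nonzero singular values while $\|\mbC^{\dagger}\|_{2}$ is governed by the smallest nonzero one, and the bound degrades from $\sqrt{n}$ to $\sqrt{\operatorname{rank}(\mbC)}$. Making this explicit is what fixes the infimum at exactly $\sqrt{n}$ and identifies the minimizers precisely as the scaled orthonormal matrices $\alpha\mbU$.
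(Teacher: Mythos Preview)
Your proof is correct and takes a somewhat different route from the paper's. You work directly with the spectral expression $\kappa^{2}(\mbC)=\sigma_{n}^{-2}\sum_{i=1}^{n}\sigma_{i}^{2}$ and obtain the lower bound $\kappa\geq\sqrt{n}$ from the trivial inequality $\sigma_{i}\geq\sigma_{n}$; the equality case then falls out immediately from the SVD. The paper instead routes through the ordinary condition number $\varrho(\mbC)=\sigma_{\max}/\sigma_{\min}$: it writes $\kappa(\mbC)=\frac{\|\mbC\|_{\mathrm{F}}}{\sigma_{\max}}\,\varrho(\mbC)$, invokes the norm inequality $\|\mbC\|_{\mathrm{F}}\leq\sqrt{n}\,\|\mbC\|_{2}$ to get $\kappa(\mbC)\leq\sqrt{n}\,\varrho(\mbC)$, and then argues that the right-hand side is minimized at $\varrho=1$, where one computes $\kappa=\sqrt{n}$ directly. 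Your argument is more self-contained --- it delivers the lower bound $\kappa\geq\sqrt{n}$ and the equality characterization in one stroke --- whereas the paper's detour through $\varrho$ highlights the link to the classical condition number but reaches the infimum less directly (indeed, minimizing an upper bound does not by itself yield a lower bound, so your inequality $\sum_{i}\sigma_{i}^{2}\geq n\sigma_{n}^{2}$ is actually the cleaner way to pin down $\inf\kappa=\sqrt{n}$). You also make the full-column-rank hypothesis explicit, which the paper leaves implicit; this is a worthwhile clarification, since without it the bound degrades to $\sqrt{\operatorname{rank}(\mbC)}$ and the stated infimum would fail.
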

\begin{proof}
The condition number of the matrix $\mbC$ is defined as $\varrho\left(\mbC\right)=\frac{\sigma_{max}}{\sigma_{min}}$, where $\sigma_{max}$ and $\sigma_{min}$ are its minimum and maximum singular values, respectively \cite{van1996matrix}. The scaled condition number can be written as $\kappa\left(\mbC\right)=\frac{\|\mbC\|_{\mathrm{F}}}{\sigma_{min}}$. Therefore, the scaled condition number has the following relation with $\varrho\left(\mbC\right)$:
\begin{equation}
\label{gorang}
\kappa\left(\mbC\right)=\frac{\|\mbC\|_{\mathrm{F}}}{\sigma_{max}}\varrho\left(\mbC\right). 
\end{equation}
Furthermore, the condition number can be considered to be an upper bound for the scaled condition number as well based on the readily-known inequality relation between norm-$2$ and the Frobenius norm \cite{van1996matrix}:
\begin{equation}
\label{sca}
\begin{aligned}
\|\mbC\|_{\mathrm{F}}&\leq \sqrt{n}  \|\mbC\|_{\mathrm{2}},\\
\frac{\|\mbC\|_{\mathrm{F}}}{\sigma_{\min}} &\leq \sqrt{n} \frac{\|\mbC\|_{\mathrm{2}}}{\sigma_{\min}},
\end{aligned}
\end{equation}
or equivalently,
\begin{equation}
\label{sca1}
\begin{aligned}
\kappa\left(\mbC\right)\leq \sqrt{n} \varrho\left(\mbC\right).
\end{aligned}
\end{equation}
Thus, lowering $\varrho\left(\mbC\right)$ generally leads to a decreasing scaled condition number. Additionally,
the lowest possible value for $\varrho$ is $1$ which is achieved for scaled unitary matrices $\mbU$ as if we let $\mbS=\alpha \mbU$, and $\mbO=\mbS^{\top}\mbS=\alpha^{2}\mbI_{n}$, then, $\sigma_{min}=\sigma_{max}=\alpha$, and $\varrho=1$. Vice versa, if $\varrho=1$, it means $\sigma_{min}=\sigma_{max}$ which leads to a diagonal matrix $\mbO=\alpha^{2}\mbI_{n}$. It is straightforward to verify that the decomposition of $\mbO$ results in an $\mbS$ that is a scaled-version of an orthonormal matrix. As a result, the lowest achievable upper bound for the scaled condition number is obtained as $\kappa\left(\mbC\right)\leq \sqrt{n}$, and according to (\ref{gorang}), $ \kappa\left(\mbC\right)=\frac{\alpha\|\mbU\|_{\mathrm{F}}}{\alpha}=\sqrt{n}$. 
\end{proof}

Accordingly, it would be enough to make our matrix $\mbC$ unitary by a process which is referred to as the \emph{preconditioning method}. In preconditioning, the linear feasibility is rewritten as
\begin{equation}
\label{new}
\mbC \mbM \mathbf{z}\preceq \mathbf{b},
\end{equation}
where $\mbM$ is the preconditioner and $\mathbf{x}=\mbM \mathbf{z}$. The straightforward way to approach this task is to use QR decomposition where the constraint matrix is decomposed as $\mbC=\mbQ_{c}\mbR_{c}$, with unitary $\mbQ_{c}\in\mathbb{R}^{m\times n}$, and $\mbR_{c}\in\mathbb{R}^{n\times n}$ is an upper triangular matrix, leading to
\begin{equation}
\label{eq:qr} 
\mbQ_{c}=\mbC\mbR^{-1}. 
\end{equation}
Thus, the good choice for the preconditioner is $\mbM=\mbR^{-1}_{c}$. To find the desired point $\mathbf{z}^{\star}$, the SKM is selected in order to apply to the linear feasibility (\ref{new}), then the desired signal $\mathbf{x}^{\star}$ is obtained from $\mathbf{x}^{\star}=\mbR^{-1}_{c}\mathbf{z}^{\star}$. We refer to this method \emph{Pr}econditioned \emph{SKM} (PrSKM).

\begin{proposition}[PrSKM]
\label{Pr_SKM}
The proposed algorithm, PrSKM, can be summarized as follows:
\begin{enumerate}
    \item Calculate the QR decomposition of the constraint matrix $\mbC$ to obtain the preconditioner $\mbM$.
    \item Using the change of variables, $\mathbf{x}=\mbM\mathbf{z}$, obtain $\mbH \mathbf{z}\preceq \mathbf{b}$, where $\mbH=\mbC\mbM$ and $\mbM=\mbR^{-1}_{c}$.
    \item Choose a sample set of $\gamma$ constraints (denoted as $\tau_{i}$) uniformly at random from the rows of $\mbH$.
    \item From these $\gamma$ constraints, choose $j^{\star}=\operatorname{argmax}~\left\{ \left(\mbh_{j} \mathbf{z}_{i}-b_{j}\right)^{+}\right\},~j\in\tau_{i}$ where $\mbh_{j}$ is the $j$th row of $\mbH$.
    \item Update the solution via the iterations $\mathbf{z}_{i+1}=\mathbf{z}_{i}-\lambda_{i}\frac{\left(\mbh_{j^{\star}} \mathbf{z}_{i}-b_{j^{\star}}\right)^{+}}{\left\|\mbh_{j^\star}\right\|^{2}_{2}} \mbh^{\mathrm{H}}_{j^{\star}}$ until convergence.
    \item Recover the desired signal from the final solution of SKM as $\mathbf{x}^{\star}=\mbM\mathbf{z}^{\star}$.
\end{enumerate}
\end{proposition}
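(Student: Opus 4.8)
The statement asserts that the six-step procedure is well-defined and recovers the desired solution $\mathbf{x}^{\star}$ of the linear feasibility problem $\mbC\mathbf{x}\preceq\mathbf{b}$ while enjoying the fastest convergence rate available to SKM. The plan is to establish three facts: (i) the change of variables in Steps~1--2 produces a feasibility problem equivalent to the original; (ii) the resulting constraint matrix $\mbH$ attains the infimum scaled condition number of Theorem~\ref{scaled_number}; and (iii) the back-substitution in Step~6 returns the correct signal.

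First I would establish the equivalence behind Steps~1, 2, and 6. Because $\mbC\in\mathbb{R}^{m\times n}$ with $m>n$ has full column rank, its thin QR decomposition $\mbC=\mbQ_{c}\mbR_{c}$ furnishes an invertible upper-triangular factor $\mbR_{c}\in\mathbb{R}^{n\times n}$, so that $\mbM=\mbR^{-1}_{c}$ is well-defined and invertible. The map $\mathbf{x}=\mbM\mathbf{z}$ is then a bijection, and $\mathbf{x}$ satisfies $\mbC\mathbf{x}\preceq\mathbf{b}$ if and only if $\mathbf{z}=\mbM^{-1}\mathbf{x}$ satisfies $\mbH\mathbf{z}=\mbC\mbM\mathbf{z}\preceq\mathbf{b}$. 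Hence any feasible $\mathbf{z}^{\star}$ maps back to a feasible $\mathbf{x}^{\star}=\mbM\mathbf{z}^{\star}$, which justifies the recovery in Step~6.

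Next I would compute the preconditioned matrix and invoke Theorem~\ref{scaled_number}. Substituting $\mbM=\mbR^{-1}_{c}$ gives $\mbH=\mbC\mbR^{-1}_{c}=\mbQ_{c}\mbR_{c}\mbR^{-1}_{c}=\mbQ_{c}$, which has orthonormal columns and is therefore of the form $\alpha\mbU$ with $\alpha=1$. Theorem~\ref{scaled_number} then yields $\kappa\left(\mbH\right)=\sqrt{n}$, the smallest scaled condition number attainable by any $m\times n$ matrix. Since Steps~3--5 are exactly the SKM iterations of Section~\ref{sec_SKM} applied to $\mbH\mathbf{z}\preceq\mathbf{b}$, the bound (\ref{eq:150}) specializes to
\begin{equation}
\mathbb{E}\left\{\hbar\left(\mathbf{z}_{i},\mathbf{z}^{\star}\right)\right\}\leq\left(1-\frac{2\lambda_{i}-\lambda^{2}_{i}}{n}\right)^{i}\hbar\left(\mathbf{z}_{0},\mathbf{z}^{\star}\right),
\end{equation}
so the iterates converge in expectation to a feasible $\mathbf{z}^{\star}$ at the optimal SKM rate guaranteed by the minimal conditioning.

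The main obstacle I anticipate is the subtlety of the thin QR factorization for the tall matrix $\mbC$: here $\mbQ_{c}$ has orthonormal columns ($\mbQ^{\top}_{c}\mbQ_{c}=\mbI_{n}$) but is not a square orthogonal matrix, so I would need to confirm that the hypotheses of Theorem~\ref{scaled_number} genuinely apply in this rectangular setting, in particular that $\|\mbQ_{c}\|_{\mathrm{F}}=\sqrt{n}$ and $\sigma_{\min}\left(\mbQ_{c}\right)=1$ both hold. Once this conditioning computation is settled, the equivalence of the two feasibility regions and the convergence of the iterates follow directly from the invertibility of $\mbM$ and the established SKM guarantee (\ref{eq:150}).
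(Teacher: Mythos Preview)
Your proposal is correct and mirrors the paper's own justification: the paper does not attach a formal proof to this proposition but instead relies on the surrounding discussion, namely Theorem~\ref{scaled_number} together with the observation $\mbH=\mbC\mbR_{c}^{-1}=\mbQ_{c}$ from (\ref{eq:qr}), to conclude $\kappa(\mbH)=\sqrt{n}$ and hence the optimal SKM rate. Your explicit verification that the thin factor $\mbQ_{c}$ still satisfies $\|\mbQ_{c}\|_{\mathrm{F}}=\sqrt{n}$ and $\sigma_{\min}(\mbQ_{c})=1$ is a welcome clarification the paper leaves implicit.
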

The scaled condition number of PrSKM is obtained as $\kappa\left(\mbH\right)=\sqrt{n}$, which implies $q=\frac{n-1}{n}$. 

\subsection{One-Bit Polyhedron}
\label{one-bit}
Consider a bandlimited signal $y\in L^{2}$ to be represented by its samples via the standard sampling formula\cite{bhandari2020unlimited},
\begin{equation}
\label{eq:1shanon}
0<\mathrm{T} \leqslant \frac{\pi}{\Omega}, \quad y(t)=\sum_{k=-\infty}^{k=+\infty} y(k \mathrm{T}) \operatorname{sinc}\left(\frac{t}{\mathrm{T}}-k\right),
\end{equation}
where $1/\mathrm{T}$ is the sampling rate and $\operatorname{sinc}(t)=\frac{\sin (\pi t)}{(\pi t)}$ is an \emph{ideal} low-pass filter. Suppose $y_{k}=y(k\mathrm{T})$ denotes the uniform samples of $y(t)$ with the sampling rate $\mathrm{T}$. Let $r_{k}$ denote the quantized version of $y[k]$ with the formulation
\begin{equation}
\label{eq:2bit}
r_{k} = Q(y_{k}),
\end{equation}
where $Q$ denotes the quantization effect. Consider a non-zero time-varying Gaussian threshold $\boldsymbol{\uptau}=\left[\uptau_{k}\right]$ with the distribution $\boldsymbol{\uptau} \sim \mathcal{N}\left(\mathbf{d}=\mathbf{1}d,\bSigma\right)$. In the case of one-bit quantization with such time-varying sampling thresholds, (\ref{eq:2bit}) is simply written as
\begin{equation}
\label{eq:3bit}
r_{k} = \operatorname{sgn}\left(y_{k}-\tau_{k}\right),
\end{equation}
where $\operatorname{sgn}(.)$ is the sign function.
\begin{algorithm}[t]
\caption{Block SKM Algorithm}
\label{algorithm_1}
\begin{algorithmic}[1]
\Statex\emph{Input:} Matrix $\mbB\in \mathbb{R}^{mn\times d}$ where $\mbB=\left[\begin{array}{c|c|c}\mbB^{\top}_{1} &\cdots&\mbB^{\top}_{m}\end{array}\right]^{\top}$, right-hand side $\mbb$ with dimension $mn$, initial value of $\mathbf{x}_{0}$ with dimension $n$, convergence tolerance $\epsilon>0$, and relaxation parameter $\lambda_{i}$.
\Statex \emph{Output:} An estimate $\mathbf{x}$ for the solution to the linear feasibility problem $\mbB\mathbf{x}\preceq \mathbf{b}$.
\State Initiate the following loop by setting $i=0$.
\While{$\left\|\left(\mbB\mathbf{x}_{i}-\mbb\right)^{+}\right\|_{2}\leq \epsilon$}
\State Choose a block $\mbB_{j}$ uniformly at random with the probability $P\{j=k\}=\frac{\left\|\mbB_{k}\right\|^{2}_{\mathrm{F}}}{\|\mbB\|_{\mathrm{F}}^{2}}$.
\State Let $\mathbf{e}^{\prime}$ denote the sorted version of $\mathbf{e}$ from $e_{\text{max}}$ (the maximum element of $\mathbf{e}$) to $e_{\text{min}}$ (the minimum element of $\mathbf{e}$).
\State Select the first $k^{\prime}<d$ element of $\mathbf{e}^{\prime}$ and construct the problem $\mbB_{j}^{\prime}\mathbf{x}	\preceq\mathbf{b}_{j}^{\prime}$, where $\mbB_{j}^{\prime}\in\mathbb{R}^{k^{\prime}\times d}$ and $\mathbf{b}_{j}^{\prime}\in\mathbb{R}^{k^{\prime}\times 1}$.
\State Compute the Moore-Penrose of $\mbB_{j}^{\prime}$, i.e.,
\[
\mbB_{j}^{\prime\dagger}\gets\mbB_{j}^{\prime\top}\left(\mbB_{j}^{\prime}\mbB_{j}^{\prime\top}\right)^{-1}.
\]
\State Update the solution $\mbx_{i+1}$ as:
\[
\mathbf{x}_{i+1}\gets\mathbf{x}_{i}-\lambda_{i}\mbB_{j}^{\prime\dagger}\left(\mbB_{j}^{\prime}\mathbf{x}-\mathbf{b}_{j}^{\prime}\right)^{+}.
\]
\State Increase $i$ by one.
\EndWhile
\end{algorithmic}
\end{algorithm}
The information gathered through the one-bit sampling with time-varying thresholds presented here may be formulated in terms of an overdetermined linear system of inequalities. In Eq.~(\ref{eq:3bit}), 
\begin{equation}
\label{eq:220000}
r_{k}=\begin{cases} +1 &y_{k}>\uptau_{k}, \\ -1&y_{k}<\uptau_{k}.\end{cases}
\end{equation}
Therefore, one can formulate the geometric location of the signal as
\begin{equation}
\label{eq:4bit}
r_{k}\left(y_{k}-\uptau_{k}\right) \geq 0.
\end{equation}
Let $\mathbf{y}=[y_{k}]$ and $\mbr=[r_{k}]$. Then, the vectorized representation of (\ref{eq:4bit}) is
\begin{equation}
\label{eq:5arar}
\mbr\odot\left(\mathbf{y}-\boldsymbol{\uptau}\right) \geq \mathbf{0},
\end{equation}
or equivalently
\begin{equation}
\label{eq:6}
\begin{aligned}
\bOmega \mathbf{y} &\succeq \mbr \odot \boldsymbol{\uptau},
\end{aligned}
\end{equation}
where $\bOmega \triangleq \operatorname{diag}\left\{\mbr\right\}$. Suppose $\mathbf{y},\boldsymbol{\uptau} \in \mathbb{R}^{n}$, and that $\boldsymbol{\uptau}^{(\ell)}$ denotes the time-varying sampling threshold sequence in $\ell$-th experiment where  $\ell\in\mathcal{L}=\{1,\cdots,m\}$. According to (\ref{eq:6}), we have
\begin{equation}
\label{eq:7}
\begin{aligned}\begin{cases}
\bOmega^{(1)} \mathbf{y} \succeq \mbr^{(1)} \odot \boldsymbol{\uptau}^{(1)}&\quad \mbr^{(1)} = \operatorname{sgn}\left(\mathbf{y}-\boldsymbol{\uptau}^{(1)}\right),\\
\bOmega^{(2)} \mathbf{y} \succeq \mbr^{(2)} \odot \boldsymbol{\uptau}^{(2)}&\quad \mbr^{(2)} = \operatorname{sgn}\left(\mathbf{y}-\boldsymbol{\uptau}^{(2)}\right),\\
\vdots&\quad \vdots\\
\bOmega^{(m)} \mathbf{y} \succeq \mbr^{(m)} \odot \boldsymbol{\uptau}^{(m)}&\quad \mbr^{(m)} = \operatorname{sgn}\left(\mathbf{y}-\boldsymbol{\uptau}^{(m)}\right),\end{cases}
\end{aligned}
\end{equation}
where $\bOmega^{(\ell)}=\operatorname{diag}\left\{\mbr^{(\ell)}\right\}$. In Eq.~(\ref{eq:7}), we have $m$ linear system of inequalities which can also be merged in one inequality as
\begin{equation}
\label{eq:8}
\Tilde{\bOmega} \mathbf{y} \succeq \operatorname{vec}\left(\mbR\right)\odot \operatorname{vec}\left(\bGamma\right),
\end{equation}
where $\mbR$ and $\bGamma$ are matrices with $\left\{\mbr^{(\ell)}\right\}$ and $\left\{\boldsymbol{\uptau}^{(\ell)}\right\}$ representing their columns, respectively, and $\Tilde{\bOmega}$ is
\begin{equation}
\label{eq:9}
\Tilde{\bOmega}=\left[\begin{array}{c|c|c}
\bOmega^{(1)} &\cdots &\bOmega^{(m)}
\end{array}\right]^{\top}, \quad \Tilde{\bOmega}\in \mathbb{R}^{m n\times n}.
\end{equation}
Assuming a large number of samples which is a common situation in one-bit sampling scenarios, hereafter, we consider (\ref{eq:8}) as an overdetermined linear system of inequalities associated with the sampling scheme presented in (\ref{eq:3bit}).
The inequality (\ref{eq:8}) can be recast by a polyhedron as
\begin{equation}
\label{eq:8n}
\begin{aligned}
\mathcal{P}_{\mathbf{y}} = \left\{\mathbf{y} \mid \Tilde{\bOmega} \mathbf{y}\succeq \operatorname{vec}\left(\mbR\right)\odot \operatorname{vec}\left(\bGamma\right)\right\},
\end{aligned}
\end{equation}
which we refer to as the \emph{one-bit polyhedron}.
\subsection{ORKA: Towards Circumventing Costly Constraints }
\label{ORKA}
If one applies one-bit sampling with time-varying sampling thresholds to the measurement $\mathbf{y}\in\mathbb{R}^{n}$ from (\ref{eq:1nnnn}) following the process defined in Subsection~\ref{one-bit}, the arising inequality system is simply given by
\begin{equation}
\label{eq:8over}
\Tilde{\bOmega}\mbA\mathbf{x} \succeq \operatorname{vec}\left(\mbR\right)\odot \operatorname{vec}\left(\bGamma\right),
\end{equation}
where $\mbA\mathbf{x}=\mathbf{y}$, and $\mathbf{x}=\operatorname{vec}\left(\mbX\right)$. Consequently, the one-bit polyhedron for this problem is obtained as
\begin{equation}
\label{eq:80n}
\begin{aligned}
\mathcal{P}_{\mathbf{x}} = \left\{\mathbf{x} \mid \mbP \mathbf{x} \succeq \operatorname{vec}\left(\mbR\right)\odot \operatorname{vec}\left(\bGamma\right)\right\},
\end{aligned}
\end{equation}
where $\mbP=\Tilde{\bOmega}\mbA$.

By taking advantage of one-bit sampling, in the asymptotic scenario of with sample abundance, the space restricted by the one-bit polyhedron $\mathcal{P}_{\mathbf{x}}$, \emph{shrinks} to become contained inside the feasible set $\mathcal{F}_{\mbX}$. Note that this shrinking space always contains the global minima, with a volume that is diminished with an increased sample size. As a result, instead of using proximal operators and orthogonal projectors, it is enough to find the desired signal $\mathbf{x}^{\star}$ in (\ref{eq:80n}).
To do so, one can use the PrSKM algorithm proposed in Section~\ref{sec_preSKM}. It is worth noting that the PrSKM is a \emph{row-based} algorithm, where at each iteration, the row index is chosen independently at random. However, the matrix $\mbP$ in (\ref{eq:80n}) has a block structure with the following formulation
\begin{equation}
\label{eq:90}
\mbP=\left[\begin{array}{c|c|c}
\mbA^{\top}\bOmega^{(1)} &\cdots &\mbA^{\top}\bOmega^{(m)}
\end{array}\right]^{\top}, \quad \mbP\in\mathbb{R}^{mn\times d},
\end{equation}
where $d=n_{1}n_{2}$. Therefore, it is useful to investigate the block-based RKA methods to find the desired signal in $\mathcal{P}_{\mathbf{x}}$ for further efficiency enhancement. Our proposed algorithm, \emph{Block SKM}, is described as follows.

\begin{algorithm}[t]
\caption{Architecture of ORKA.}
\label{algorithm_2}
\begin{algorithmic}[1]
\Statex \emph{Input:} The measurement vector $\mathbf{y}$ obtained as $\mathcal{A}\left(\mbX\right)=\mathbf{y}$ from (\ref{eq:1nnnn}), $m$ sequences of time-varying sampling thresholds generated as $\left\{\boldsymbol{\uptau}^{(\ell)}\sim \mathcal{N}\left(\mathbf{0},\mbI\right); \ell\in\mathcal{L}\right\}$.
\Statex \emph{Output:} Recovered optimal signal $\mathbf{x}^{\star}=\operatorname{vec}\left(\mbX^{\star}\right)$.
\State Apply one-bit sampling on $\mathbf{y}$ and generate sequences of one-bit measurements from:
\[
\mbr^{(\ell)}\gets\operatorname{sgn}\left(\mathbf{x}-\boldsymbol{\uptau}^{(\ell)}\right), \quad \ell \in \mathcal{L}.
\]
\State Construct a linear feasibility region from the one-bit sampled data as:
\[
\bOmega^{(\ell)} \mathbf{y} \succeq \mbr^{(\ell)} \odot \boldsymbol{\uptau}^{(\ell)}, \quad \ell \in \mathcal{L}.
\]
\State Define a highly-overdetermined system, the one-bit polyhedron, based on obtained inequalities:
\[
\mathcal{P}_{\mathbf{x}} = \left\{\mathbf{x} \mid \mbP \mathbf{x} \succeq \operatorname{vec}\left(\mbR\right)\odot \operatorname{vec}\left(\bGamma\right)\right\},
\]
where $\mbR$ and $\bGamma$ are matrices with $\left\{\mbr^{(\ell)}\right\}$ and $\left\{\boldsymbol{\uptau}^{(\ell)}\right\}$ representing their columns, respectively.
\State Employ RKA variants (PrSKM or Block SKM) to recover $\mbX^{\star}$ within the one-bit polyhedron.
\end{algorithmic}
\end{algorithm}

\begin{proposition}[Block SKM]
\label{Block_SKM}
We have a linear feasibility problem $\mbB\mathbf{x}\preceq \mathbf{b}$ where $\mbB=\left[\begin{array}{c|c|c}\mbB^{\top}_{1} &\cdots&\mbB^{\top}_{m}\end{array}\right]^{\top}$, and $\mathbf{b}=\left[\begin{array}{c|c|c}\mathbf{b}^{\top}_{1} &\cdots&\mathbf{b}^{\top}_{m}\end{array}\right]^{\top}$. The proposed algorithm for feasible signal recovery, Block SKM, can be summarized as follows:
\begin{enumerate}
    \item Choose a block $\mbB_{j}$ uniformly at random with the probability $P\{j=k\}=\frac{\left\|\mbB_{k}\right\|^{2}_{\mathrm{F}}}{\|\mbB\|_{\mathrm{F}}^{2}}$.
    \item Compute $\mathbf{e}=\mbB_{j}\mathbf{x}-\mathbf{b}_{j}$.
    \item Let $\mathbf{e}^{\prime}$ denote the sorted version of $\mathbf{e}$ from $e_{\text{max}}$ (the maximum element of $\mathbf{e}$) to $e_{\text{min}}$ (the minimum element of $\mathbf{e}$). This step is inspired by the idea of the Motzkin sampling, presented in Section~\ref{sec_SKM}, to have an accelerated convergence.
    \item Select the first $k^{\prime}<d$ element of $\mathbf{e}^{\prime}$ and construct the sub-problem $\mbB_{j}^{\prime}\mathbf{x}	\preceq\mathbf{b}_{j}^{\prime}$, where $\mbB_{j}^{\prime}\in\mathbb{R}^{k^{\prime}\times d}$ and $\mathbf{b}_{j}^{\prime}\in\mathbb{R}^{k^{\prime}\times 1}$. The reason behind choosing $k^{\prime}<d$ is due to the computation of $\left(\mbB_{j}^{\prime}\mbB_{j}^{\prime\top}\right)^{-1}$ in the next step (Step $5$). For $k^{\prime}>d$, the matrix $\mbB_{j}^{\prime}\mbB_{j}^{\prime\top}$ is rank-deficient and its inverse is not available.
    \item Compute the Moore-Penrose of $\mbB_{j}^{\prime}$, $\mbB_{j}^{\prime\dagger}=\mbB_{j}^{\prime\top}\left(\mbB_{j}^{\prime}\mbB_{j}^{\prime\top}\right)^{-1}$.
    \item Update the solution $\mathbf{x}_{i+1}=\mathbf{x}_{i}-\lambda_{i}\mbB_{j}^{\prime\dagger}\left(\mbB_{j}^{\prime}\mathbf{x}-\mathbf{b}_{j}^{\prime}\right)^{+}$. This update process is inspired from the randomized block Kaczmarz method \cite{elfving1980block,needell2014paved} which takes advantage of the efficient matrix-vector multiplication, thus giving the method a significant reduction in computational cost \cite{briskman2015block}.
\end{enumerate}
\end{proposition}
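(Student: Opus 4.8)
The statement of Proposition~\ref{Block_SKM} is a \emph{specification} of the Block SKM iteration rather than an inequality to be bounded, so the natural reading of its ``proof'' is a verification that every enumerated step is well-defined and internally consistent, and that the three embedded mathematical claims hold: (i) the closed-form pseudoinverse $\mbB_{j}^{\prime\dagger}=\mbB_{j}^{\prime\top}\left(\mbB_{j}^{\prime}\mbB_{j}^{\prime\top}\right)^{-1}$ asserted in Step~5, (ii) the rank argument given in Step~4 that motivates the restriction $k^{\prime}<d$, and (iii) the claim that the Step~6 update is a legitimate relaxed projection that specializes to the randomized block Kaczmarz and Motzkin selection schemes already described. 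The plan is to dispatch these three points in turn, after which the remaining steps (the block draw in Step~1, the residual and sorting in Steps~2--3) are seen to be purely computational and require no further justification.

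First I would verify the pseudoinverse formula. Under the construction of Step~4 the matrix $\mbB_{j}^{\prime}\in\mathbb{R}^{k^{\prime}\times d}$ is a wide matrix with $k^{\prime}<d$; for generic one-bit data its $k^{\prime}$ selected rows are linearly independent, so $\mbB_{j}^{\prime}$ has full row rank $k^{\prime}$. Consequently $\mbB_{j}^{\prime}\mbB_{j}^{\prime\top}\in\mathbb{R}^{k^{\prime}\times k^{\prime}}$ is symmetric positive definite and hence invertible, making the right inverse $\mbB_{j}^{\prime\top}\left(\mbB_{j}^{\prime}\mbB_{j}^{\prime\top}\right)^{-1}$ well-defined. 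I would then confirm it is the Moore--Penrose pseudoinverse by checking the four Penrose conditions (or by citing the standard full-row-rank formula), the key one being $\mbB_{j}^{\prime}\mbB_{j}^{\prime\dagger}=\mbI_{k^{\prime}}$.

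Next I would justify the restriction $k^{\prime}<d$ by the rank argument in Step~4: if instead $k^{\prime}>d$, then $\operatorname{rank}\left(\mbB_{j}^{\prime}\right)\leq d<k^{\prime}$, so $\mbB_{j}^{\prime}\mbB_{j}^{\prime\top}$ has rank at most $d$ and is therefore singular, whence $\left(\mbB_{j}^{\prime}\mbB_{j}^{\prime\top}\right)^{-1}$ does not exist. This shows the bound $k^{\prime}<d$ is exactly what keeps Step~5 well-posed. Finally, I would interpret Step~6: substituting the pseudoinverse into $\mathbf{x}_{i+1}=\mathbf{x}_{i}-\lambda_{i}\mbB_{j}^{\prime\dagger}\left(\mbB_{j}^{\prime}\mathbf{x}-\mathbf{b}_{j}^{\prime}\right)^{+}$ shows that the increment lies in the row space of $\mbB_{j}^{\prime}$ and, for $\lambda_{i}=1$, drives the violated residuals of the selected subsystem to zero, i.e., it orthogonally projects onto the intersection of the active half-spaces; this recovers the block Kaczmarz step of \cite{elfving1980block,needell2014paved}, while the sorted top-$k^{\prime}$ selection in Steps~3--4 realizes the Motzkin-type greedy rule of SKM from Section~\ref{sec_SKM}.

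The only delicate point — and the step I expect to require the most care — is the handling of the thresholding operator $(\cdot)^{+}$ together with the full-row-rank assumption: one must confirm that restricting to the $k^{\prime}$ largest residuals yields a subsystem whose rows remain linearly independent (so that Step~5 is valid for the chosen rows, not merely for a generic $k^{\prime}$-subset), and that the $(\cdot)^{+}$ correctly zeroes out any non-violated constraints that happen to appear among the top $k^{\prime}$, so that the update never pushes $\mathbf{x}_{i}$ \emph{away} from an already-satisfied inequality. Establishing these two facts completes the verification that the algorithm as stated is well-defined and faithfully combines the block and Motzkin ingredients it claims to.
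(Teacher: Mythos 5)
The paper supplies no proof for this proposition: it is an algorithm specification, and the only justifications it contains are the inline remarks you also address --- the rank argument motivating $k^{\prime}<d$ and the provenance from Motzkin sampling and the randomized block Kaczmarz method. Your reading of the ``proof'' as a well-posedness verification is therefore the right one, and your points (i) and (ii) match the paper's own reasoning; your flagged delicate point is real, since full row rank of the selected $k^{\prime}$ rows is a genericity property (almost sure here because each block of $\mbP$ is $\bOmega^{(\ell)}\mbA$, so the selected rows are distinct signed rows of a Gaussian $\mbA$) rather than a consequence of $k^{\prime}<d$ alone; note also that $k^{\prime}=d$ would be equally admissible, so the strict inequality is a design choice, not a necessity.

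One genuine imprecision should be fixed. For $\lambda_{i}=1$ the update is \emph{not} the orthogonal projection onto the intersection of the violated half-spaces' bounding hyperplanes. Writing $\mathbf{v}=\left(\mbB_{j}^{\prime}\mathbf{x}_{i}-\mathbf{b}_{j}^{\prime}\right)^{+}$, the step $\mathbf{x}_{i+1}=\mathbf{x}_{i}-\mbB_{j}^{\prime\dagger}\mathbf{v}$ is the minimum-norm correction enforcing $\mbB_{j}^{\prime}\mathbf{x}_{i+1}=\mbB_{j}^{\prime}\mathbf{x}_{i}-\mathbf{v}$: it zeroes the violated residuals while \emph{freezing} the residuals of the satisfied selected rows, and since $\left(\mbB_{j}^{\prime}\mbB_{j}^{\prime\top}\right)^{-1}\mathbf{v}$ is generally dense, the increment draws on all selected rows, not only the violated ones. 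Concretely, with rows $(1,0)$ and $(1,1)$ and residuals $(1,0)$, the increment is $-(1,-1)$, whereas projecting onto the violated hyperplane alone would give $-(1,0)$; the two maps coincide only when the violated and satisfied selected rows are mutually orthogonal (or all selected rows are violated). Your substantive conclusion survives this correction --- for any $\lambda_{i}$ the satisfied in-block residuals are exactly unchanged, so the update never pushes $\mathbf{x}_{i}$ into violating an already-satisfied selected constraint --- but the sentence identifying the step with a projection onto the active half-spaces should be restated accordingly.
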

The steps of the block SKM and ORKA are summarized in Algorithm~\ref{algorithm_1} and Algorithm~\ref{algorithm_2}, respectively. To examine the performance of the block SKM, we compare it to the PrSKM, SKM and RKA.
\begin{figure*}[t]
	\centering
	\begin{subfigure}[b]{0.45\textwidth}
		\includegraphics[width=1\linewidth]{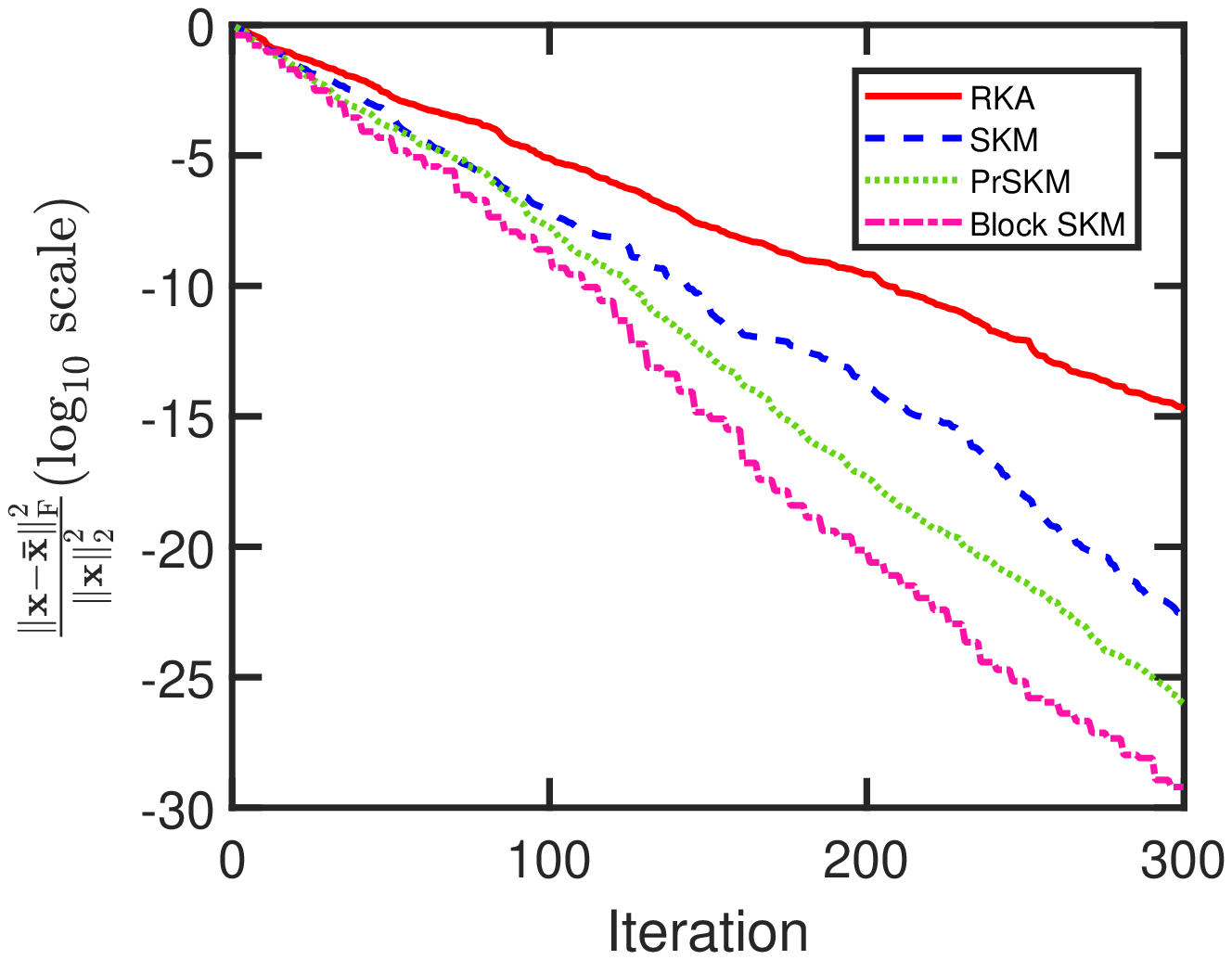}
		\caption{}
	\end{subfigure}
	\begin{subfigure}[b]{0.45\textwidth}
		\includegraphics[width=1\linewidth]{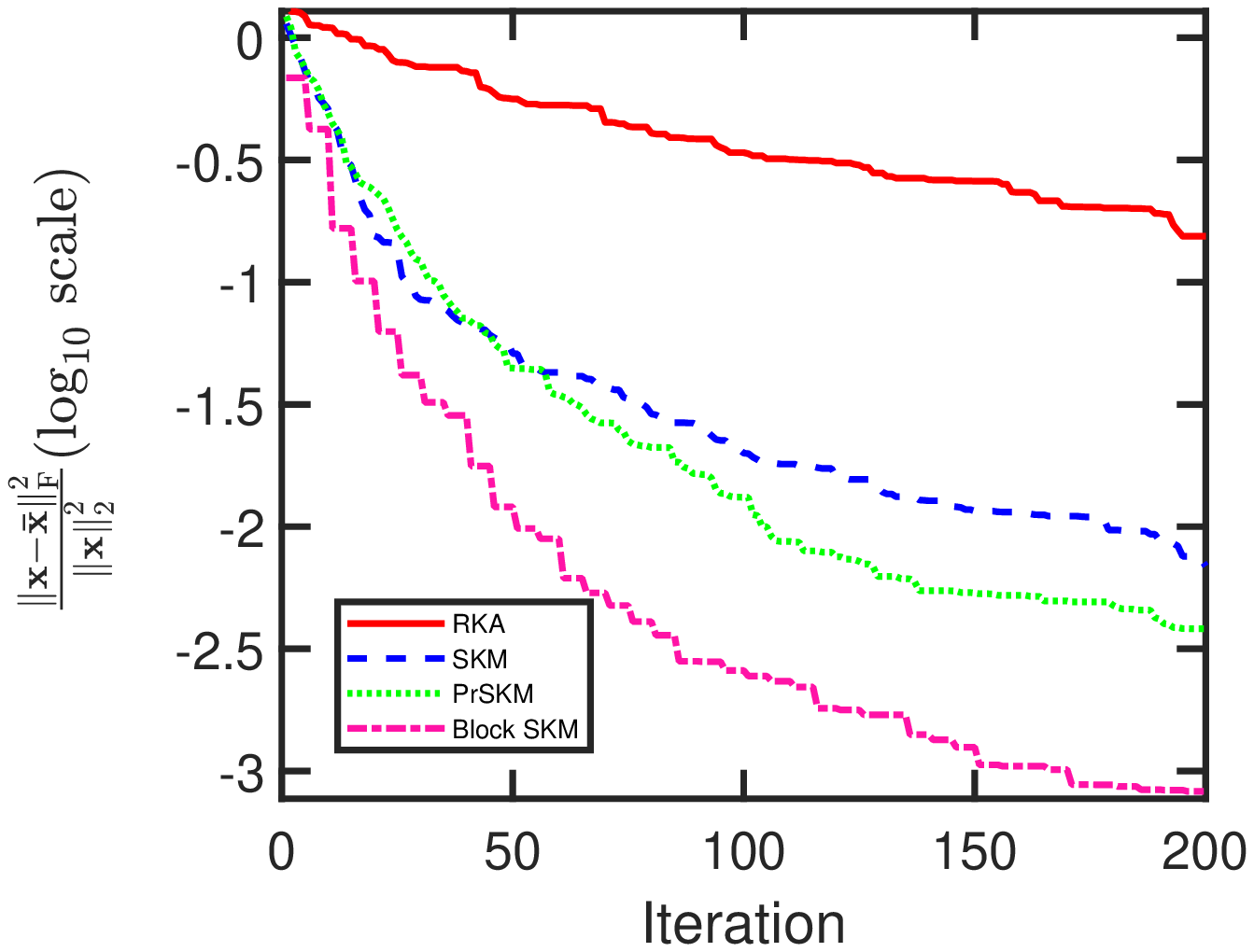}
		\caption{}
	\end{subfigure}
	\caption{Comparing the NMSE recovery performance of the two proposed Kaczmarz algorithms, namely the PrSKM and the block SKM, with that of SKM and RKA for: (a) a linear equation system, (b) a linear inequality system. }
	\label{figure_1}
\end{figure*}
\subsection{Comparing RKA, SKM, PrSKM and Block SKM}
\label{NUM_COM}
In this section, we numerically compare the RKA, SKM, PrSKM, and Block SKM in linear systems of equalities as well as those formed by inequalities.\\

\textbf{Linear feasibility of equalities}: Herein, we consider a block linear system of equalities $\mbA\mathbf{x}=\mathbf{b}$, where $\mbA=\left[\begin{array}{c|c|c}
\mbA_{1}^{\top} &\cdots &\mbA_{100}^{\top}\end{array}\right]^{\top}$, $\mbA_{i}\in\mathbb{R}^{10\times 10}$, $\mathbf{x}\in\mathbb{R}^{10}$, and $\mathbf{b}\in\mathbb{R}^{1000}$. Each row of $\mbA_{i}$ is generated as $\mathbf{a}_{j}^{i}\sim \mathcal{N}\left(\mathbf{0},\mbI_{10}\right)$. Also, the unknown signal $\mathbf{x}$ is generated as $\mathbf{x}\sim\mathcal{N}\left(\mathbf{0},\mbI_{10}\right)$.
The normalized mean square error (NMSE) is defined as
\begin{equation}
\label{eq:1700000}
\operatorname{NMSE} \triangleq \frac{\left\|\mathbf{x}^{\star}-\bar{\mathbf{x}}\right\|_{2}^{2}}{\left\|\mathbf{x}^{\star}\right\|_{2}^{2}},
\end{equation}
where $\mathbf{x}^{\star}$ and $\bar{\mathbf{x}}$ denote the true discretized signal and its recovered version, respectively.

Fig.~\ref{figure_1} illustrates the performance of RKA, SKM, PrSKM, and Block SKM in the recovery of $\mathbf{x}$ from the system $\mbA\mathbf{x}=\mathbf{b}$ with NMSE results. As can be observed, the Block SKM outperforms the other three approaches in the recovery task. Also, it can be seen that the PrSKM has a better recovery performance compared to that of the RKA and the SKM.\\

\textbf{Linear feasibility of inequalities}: We utilize ORKA to make a linear equation $\mbB\mathbf{x}=\mathbf{y}$ linear inequalities system, where the number of time-varying sampling threshold sequences is $m=40$, $\mbB\in\mathbb{R}^{100\times 10}$, $\mathbf{x}\in\mathbb{R}^{10}$, and $\mathbf{y}\in\mathbb{R}^{100}$. Each row of $\mbB$ is generated as $\mathbf{b}_{j}\sim\mathcal{N}\left(\mathbf{0},\mbI_{10}\right)$. Also, the desired signal $\mathbf{x}$ is generated as $\mathbf{x}\in\sim\mathcal{N}\left(\mathbf{0},\mbI_{10}\right)$. Each time-varying sampling threshold sequence $\boldsymbol{\uptau}^{(\ell)}$ is considered to have the distribution $\boldsymbol{\uptau}^{(\ell)}\sim\mathcal{N}\left(\mathbf{0},\mbI_{10}\right)$. The performance of the RKA, SKM, PrSKM, and Block SKM is illustrated in Fig.~\ref{figure_1}. Similar to the linear feasibility of equalities, it can be seen that the Block SKM has a better accuracy in the recovery of the desired signal $\mathbf{x}$ in the one-bit polyhedron (\ref{eq:80n}) compared to the other three approaches. The NMSE results in Fig.~\ref{figure_1} are averaged over $15$ experiments.

\section{Probabilistic Effect of Sample Abundance In ORKA}
\label{ada_thresh_prob}
An integral part of our proposed recovery algorithm is RKA, whose recovery error was readily given in (\ref{eq:150}). As shown in \cite{strohmer2009randomized}, the convergence rate of RKA does not depend on the number of equations in the system. We will show that the convergence rate of ORKA for linear feasibility is the same as RKA. Nevertheless, when we face a non-linear constraint in our problem, as is generally the case in (\ref{eq:1nnnn}), it is desirable are made redundant by using the opportunity of having a large number of samples; as typically provided via one-bit sampling. In such a case, The offered convergence rate appears to be insufficient since we must have enough number of samples to fulfill costly constraints. So, an extra term as a \emph{penalty} must be considered to present the importance of sample size in our algorithm \cite{9896984}.    

By adding more inequality constraints in (\ref{eq:80n}) as a result of extra one-bit samples, the shrinkage of the said polyhedron will put a downward pressure on the distance between the desired signal $\mathbf{x}^{\star}$ and its surrounding hyperplanes, each presenting an informative measurement that will shrink the feasibility space. We will show that by judicious sampling, the average of these distances will be bounded, which may be considered to be a finite-volume space created around $\mathbf{x}^{\star}$. Moreover, as a result of using an overdetermined linear system of inequalities, the convergence of the RKA is guaranteed \cite{9896984,leventhal2010randomized,needell2014paved,de2017sampling}.

\subsection{Recovery Error Upper Bound for ORKA}
\label{error}
As the scaled condition number is the central parameter governing the recovery error of Kaczmarz algorithms, we will evaluate it for ORKA-created matrix $\mbP$ in the following, starting with $\sigma_{min}$. The singular values of $\mbP$ may be determined based on the following theorem, which thus unveils the value of $\sigma_{min}$.
\begin{theorem}
\label{theorem_2}
In ORKA, the rank of $\mbP$ is equal to that of the constraint matrix $\mbA$, and its singular values are given by 
\begin{equation}
\label{singular}
\left\{\sigma_{i}\right\}=\sqrt{m}\left\{\sigma_{i\mbA}\right\},
\end{equation}
where $\left\{\sigma_{i\mbA}\right\}$ are singular values of $\mbA$, and $m$ is the number of time-varying sampling threshold sequences. Moreover, the scaled condition number of the ORKA-created matrix $\mbP$ is equal to that of the constraint matrix $\mbA$:
\begin{equation}
 \kappa\left(\mbP\right)=\kappa\left(\mbA\right).   
\end{equation}
\end{theorem}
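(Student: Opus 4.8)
The plan is to exploit the block structure of $\mbP$ displayed in (\ref{eq:90}) together with the special form of the diagonal blocks $\bOmega^{(\ell)}=\operatorname{diag}\left\{\mbr^{(\ell)}\right\}$, whose diagonal entries are the one-bit signs $\pm 1$. The key observation, on which everything else rests, is that each $\bOmega^{(\ell)}$ is diagonal with unit-modulus entries and is therefore orthogonal, so that $\left(\bOmega^{(\ell)}\right)^{\top}\bOmega^{(\ell)}=\mbI_{n}$ for every $\ell$. This is what makes the singular-value structure collapse cleanly.

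First I would form the Gram matrix $\mbP^{\top}\mbP$. Writing $\mbP$ as the vertical stack of the blocks $\bOmega^{(\ell)}\mbA$ and using the orthogonality noted above, the product telescopes into a sum over the $m$ threshold experiments,
\begin{equation}
\mbP^{\top}\mbP=\sum_{\ell=1}^{m}\mbA^{\top}\left(\bOmega^{(\ell)}\right)^{\top}\bOmega^{(\ell)}\mbA=\sum_{\ell=1}^{m}\mbA^{\top}\mbA=m\,\mbA^{\top}\mbA.
\end{equation}
Because the singular values of $\mbP$ are the square roots of the eigenvalues of $\mbP^{\top}\mbP$, this identity immediately yields $\sigma_{i}=\sqrt{m}\,\sigma_{i\mbA}$, establishing (\ref{singular}). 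The rank claim follows from the same identity, since $\operatorname{rank}\left(\mbP\right)=\operatorname{rank}\left(\mbP^{\top}\mbP\right)=\operatorname{rank}\left(m\,\mbA^{\top}\mbA\right)=\operatorname{rank}\left(\mbA\right)$, the positive scaling by $m$ leaving the rank unchanged.

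For the scaled condition number I would invoke the definition $\kappa\left(\mbC\right)=\|\mbC\|_{\mathrm{F}}\|\mbC^{\dagger}\|_{2}$ recalled after (\ref{eq:16}). Taking the trace of the Gram identity gives $\|\mbP\|_{\mathrm{F}}^{2}=\tr{\mbP^{\top}\mbP}=m\,\tr{\mbA^{\top}\mbA}=m\,\|\mbA\|_{\mathrm{F}}^{2}$, hence $\|\mbP\|_{\mathrm{F}}=\sqrt{m}\,\|\mbA\|_{\mathrm{F}}$; meanwhile the singular-value relation gives $\|\mbP^{\dagger}\|_{2}=1/\sigma_{\min}\left(\mbP\right)=1/\left(\sqrt{m}\,\sigma_{\min}\left(\mbA\right)\right)=\tfrac{1}{\sqrt{m}}\|\mbA^{\dagger}\|_{2}$. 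Multiplying the two expressions, the factors of $\sqrt{m}$ cancel and one obtains $\kappa\left(\mbP\right)=\|\mbA\|_{\mathrm{F}}\|\mbA^{\dagger}\|_{2}=\kappa\left(\mbA\right)$.

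Since every step reduces to the single orthogonality observation plus routine linear algebra, I do not expect a genuine obstacle here; the argument is essentially a one-line Gram-matrix computation. The only point requiring care is bookkeeping in the complex setting, where $\mbA\in\mathbb{C}^{n\times n_{1}n_{2}}$ and the transpose should be replaced by the Hermitian transpose throughout. This changes nothing substantive, because $\left(\bOmega^{(\ell)}\right)^{\mathrm{H}}\bOmega^{(\ell)}=\mbI_{n}$ continues to hold for the real $\pm 1$ diagonal matrices $\bOmega^{(\ell)}$, so the identity $\mbP^{\mathrm{H}}\mbP=m\,\mbA^{\mathrm{H}}\mbA$ carries the proof through verbatim.
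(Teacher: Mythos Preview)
Your proposal is correct and follows essentially the same approach as the paper: both arguments hinge on computing $\mbP^{\top}\mbP=m\,\mbA^{\top}\mbA$ via the orthogonality of the sign-diagonal blocks $\bOmega^{(\ell)}$, then reading off the singular values and the Frobenius norm from that identity. Your write-up is in fact slightly more explicit than the paper's, as you spell out the rank argument and the $\|\mbP^{\dagger}\|_{2}$ factor separately, and you also note the harmless transpose/Hermitian bookkeeping for the complex case.
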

\begin{proof}
To obtain the singular values of $\mbP$, the matrix $\mbW = \mbP^{\top}\mbP$ is computed as
\begin{equation}
\label{proof_th}
\begin{aligned}
\mbW &= \mbP^{\top}\mbP,\\&= \left[\mbA^{\top}\bOmega^{(1)}\vdots \mbA^{\top}\bOmega^{(2)}\vdots\cdots\vdots\mbA^{\top}\bOmega^{(m)}\right]
\left[\begin{array}{c}
\bOmega^{(1)}\mbA\\
\cdots\\
\bOmega^{(2)}\mbA\\
\vdots\\
\cdots\\
\bOmega^{(m)}\mbA
\end{array}\right],\\
&=\mbA^{\top}\bOmega^{(1)}\bOmega^{(1)}\mbA+\cdots+\mbA^{\top}\bOmega^{(m)}\bOmega^{(m)}\mbA,\\
&= m\mbA^{\top}\mbI\mbA=m\mbA^{\top}\mbA,
\end{aligned}
\end{equation}
which means the singular values of $\mbP$ are $\left\{\sigma_{i}\right\}=\sqrt{m}\left\{\sigma_{i\mbA}\right\}$. 

Also, the Frobenius norm of  
$\mbP$ is obtained as
\begin{equation}
\label{frob}
\begin{aligned}
\|\mbP\|^{2}_{\mathrm{F}}&=\operatorname{Tr}\left(\mbP^{\top}\mbP\right),\\ &= \operatorname{Tr}\left(m\mbA^{\top}\mbA\right) = m\|\mbA\|^{2}_{\mathrm{F}}.
\end{aligned}
\end{equation}
Consequently, the scaled condition number is independent of the number of time-varying thresholds sequences. It follows that $\kappa\left(\mbP\right)=\kappa\left(\mbA\right)$. 
\end{proof}
\begin{corollary}
\label{col_1}
For $\mbA=\mbI$, corresponding to the one-bit sampled signal recovery problem, the scaled condition number of $\bOmega=\left[\bOmega^{(1)}\vdots \bOmega^{(2)}\vdots\cdots\vdots\bOmega^{(m)}\right]$ is $\kappa\left(\Omega\right)=\sqrt{n}$, which is the infimum of the scaled condition number as shown in Theorem~\ref{scaled_number}.
\end{corollary}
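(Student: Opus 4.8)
The plan is to treat this corollary as a direct specialization of Theorem~\ref{theorem_2}, supplemented by an elementary evaluation of the scaled condition number of the identity. First I would observe that setting $\mbA=\mbI$ in the ORKA construction (\ref{eq:80n}) collapses the constraint matrix $\mbP=\Tilde{\bOmega}\mbA$ to $\mbP=\bOmega$, the stacked sign matrix of the corollary. Thus the object whose scaled condition number we seek is precisely the $\mbP$ analyzed in Theorem~\ref{theorem_2} for the special case $\mbA=\mbI$.

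Next I would invoke Theorem~\ref{theorem_2}, which gives $\kappa(\mbP)=\kappa(\mbA)$; with $\mbA=\mbI_{n}$ the task reduces to computing $\kappa(\mbI_{n})$ directly. Since every singular value of $\mbI_{n}$ equals one, we have $\sigma_{\min}=1$ and $\|\mbI_{n}\|_{\mathrm{F}}=\sqrt{n}$, so that $\kappa(\mbI_{n})=\|\mbI_{n}\|_{\mathrm{F}}/\sigma_{\min}=\sqrt{n}$. Chaining these equalities yields $\kappa(\bOmega)=\sqrt{n}$.

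The only step meriting care is confirming that this value coincides with the infimum of Theorem~\ref{scaled_number}, i.e. that $\bOmega$ lies in the extremal family $\alpha\mbU$ with $\mbU$ orthonormal. I would verify this by forming $\bOmega^{\top}\bOmega=\sum_{\ell=1}^{m}\left(\bOmega^{(\ell)}\right)^{\top}\bOmega^{(\ell)}$; because each $\bOmega^{(\ell)}=\operatorname{diag}\left\{\mbr^{(\ell)}\right\}$ is diagonal with $\pm1$ entries, every summand equals $\mbI_{n}$, whence $\bOmega^{\top}\bOmega=m\mbI_{n}$ (a relation already implicit in the proof of Theorem~\ref{theorem_2} evaluated at $\mbA=\mbI$). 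Consequently $\bOmega=\sqrt{m}\,\mbU$ with $\mbU=\bOmega/\sqrt{m}$ orthonormal, placing $\bOmega$ exactly in the form demanded by Theorem~\ref{scaled_number} with $\alpha=\sqrt{m}$, so that $\kappa(\bOmega)=\sqrt{n}$ attains the infimum. I do not anticipate any genuine obstacle here: the argument is a short chain of specializations, and its sole quantitative ingredient is that one-bit sign entries square to unity, which is exactly what renders the columns of $\bOmega/\sqrt{m}$ orthonormal.
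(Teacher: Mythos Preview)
Your proposal is correct and mirrors exactly what the paper intends: the corollary is stated without a separate proof because it is an immediate specialization of Theorem~\ref{theorem_2} to $\mbA=\mbI$, together with the trivial computation $\kappa(\mbI_{n})=\|\mbI_{n}\|_{\mathrm{F}}/\sigma_{\min}=\sqrt{n}$. Your additional check that $\bOmega^{\top}\bOmega=m\mbI_{n}$ (hence $\bOmega=\sqrt{m}\,\mbU$ with $\mbU$ orthonormal) is a nice explicit confirmation of the extremal form in Theorem~\ref{scaled_number}, but it is not needed beyond what Theorem~\ref{theorem_2} already supplies.
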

Note that the convergence bound (\ref{eq:150}) for ORKA is independent of the number of time-varying sampling threshold sequences $m$, and it cannot take into account the effect of an increasing number of time-varying threshold sequences. Inspired by \cite{9896984}, we augment (\ref{eq:150}) with a sample size-dependent penalty function to make it useful in a sample abundance scenario:
\begin{proposition}[Convergence rate of ORKA]
\label{penaltyyy}
In the proposed recovery approach, it is deemed necessary to have a sufficient number of samples (inequalities) in order to guarantee a finite-volume feasible region and a bounded recovery error. Once our search area is located inside $\mathcal{F}_{\mbX}$, we may effectively deploy (\ref{eq:150}) for the convergence rate. The convergence rate of the Kaczmarz variants is useful when we have a linear feasibility problem. On the other hand, in (\ref{eq:1nnnn}), the main constraints are non-linear and they may be considered to be redundant by deploying enough samples \cite{9896984}. A sample size-aware convergence rate for ORKA may be formulated as:
\begin{equation}
\label{bound2}
\begin{aligned}
\mathbb{E}\left\{\hbar\left(\mathbf{x}_{i},\mathbf{x}^{\star}\right)\right\} \leq \left(1-\frac{2\lambda_{i}-\lambda^{2}_{i}}{\kappa^{2}\left(\mbA\right)}\right)^{i} \hbar\left(\mathbf{x}_{0},\mathbf{x}^{\star}\right)+\Upsilon\left(m\right),
\end{aligned}
\end{equation}
where $	\Upsilon(.)$ is an asymptotically decreasing function, such that if the number of time-varying threshold sequences is enough for the one-bit polyhedron to fit inside $\mathcal{F}_{\mbX}$, the sample size-dependent penalty function $\Upsilon\left(m\right)$ approaches zero.
\end{proposition}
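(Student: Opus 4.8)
The plan is to decompose the right-hand side of (\ref{bound2}) into a pure \emph{linear-feasibility} contraction, which I can import essentially verbatim from the SKM analysis, and a \emph{penalty} term $\Upsilon(m)$ that accounts for the fact that solving the relaxed one-bit polyhedron $\mathcal{P}_{\mathbf{x}}$ is not the same as solving (\ref{eq:1nnnn}) with its non-linear constraints. I would first record that the true signal $\mathbf{x}^{\star}$ satisfies every one-bit inequality by construction, since $\mbr^{(\ell)}=\operatorname{sgn}(\mathbf{y}-\boldsymbol{\uptau}^{(\ell)})$ forces $\mbr^{(\ell)}\odot(\mathbf{y}-\boldsymbol{\uptau}^{(\ell)})\succeq\mathbf{0}$; hence $\mathbf{x}^{\star}\in\mathcal{P}_{\mathbf{x}}$ and the system $\mbP\mathbf{x}\succeq\operatorname{vec}(\mbR)\odot\operatorname{vec}(\bGamma)$ is consistent, which is what licenses any use of (\ref{eq:150}).

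For the first summand I would apply the SKM convergence bound (\ref{eq:150}) directly to the ORKA constraint matrix $\mbP=\Tilde{\bOmega}\mbA$, producing the contraction factor $1-(2\lambda_{i}-\lambda^{2}_{i})/\kappa^{2}(\mbP)$. Theorem~\ref{theorem_2} then supplies $\kappa(\mbP)=\kappa(\mbA)$, so $\kappa^{2}(\mbP)$ may be replaced by $\kappa^{2}(\mbA)$ and the first term of (\ref{bound2}) follows immediately. This step is essentially bookkeeping and is independent of the number of threshold sequences $m$, mirroring the classical fact that the randomized Kaczmarz rate is insensitive to the number of rows.

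The substance of the proposition is the penalty $\Upsilon(m)$. The bound just obtained only certifies that $\mathbf{x}_{i}$ approaches \emph{some} point of the polyhedron; because the relaxation has discarded the non-linear constraints, a feasible point of $\mathcal{P}_{\mathbf{x}}$ need not equal $\mathbf{x}^{\star}$ unless $\mathcal{P}_{\mathbf{x}}\subseteq\mathcal{F}_{\mbX}$. I would therefore bound the quantity of interest by a triangle-type split, $\|\mathbf{x}_{i}-\mathbf{x}^{\star}\|_{2}\le d(\mathbf{x}_{i},\mathcal{P}_{\mathbf{x}})+\operatorname{diam}(\mathcal{P}_{\mathbf{x}})$, assign the first piece to the contraction term, and \emph{define} $\Upsilon(m)$ to be the expectation (over the random thresholds) of the squared radius of $\mathcal{P}_{\mathbf{x}}$ about $\mathbf{x}^{\star}$. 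Each new threshold sequence adds $n$ random half-spaces whose boundaries lie near $\mathbf{x}^{\star}$, so the intersection shrinks monotonically and is eventually trapped inside $\mathcal{F}_{\mbX}$, making $\Upsilon$ a decreasing function of $m$.

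The hard part will be quantifying how fast $\Upsilon(m)$ decays. The plan is to isolate the failure event — that the polyhedron radius exceeds a prescribed tolerance, equivalently that $\mathcal{P}_{\mathbf{x}}$ still protrudes outside $\mathcal{F}_{\mbX}$ — and to control its probability with a Chernoff / moment-generating-function estimate on the recovery error, precisely the device developed in \cite{9896984} and deferred to Section~\rom{4}. Because the sequences $\{\boldsymbol{\uptau}^{(\ell)}\}$ are drawn i.i.d.\ Gaussian, they act as independent trials, so this probability should decay exponentially in $m$; integrating the tail then yields an asymptotically vanishing $\Upsilon(m)$ with $\Upsilon(m)\to 0$. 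Pinning down the explicit exponent, and verifying that the constant multiplying the exponential is uniform in the iteration index $i$ so that the two terms can be added cleanly, is where the genuine technical effort resides.
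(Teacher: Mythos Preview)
Your proposal matches the paper's approach closely. In the paper, Proposition~\ref{penaltyyy} is not accompanied by a formal proof at all: the bound (\ref{bound2}) is \emph{posited} as a sample-size-aware ansatz, the first summand being justified exactly as you do (the SKM bound (\ref{eq:150}) applied to $\mbP$ together with Theorem~\ref{theorem_2} to replace $\kappa(\mbP)$ by $\kappa(\mbA)$), and the penalty $\Upsilon(m)$ being constructed in the next subsection via Theorem~\ref{theorem_0}, i.e., the Chernoff/MGF estimate on the distances $d_{j}(\mathbf{x}^{\star},\boldsymbol{\uptau}^{(\ell)})$, culminating in the Pad\'e approximation (\ref{pade_1}). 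Your plan to quantify the decay of $\Upsilon(m)$ through a Chernoff bound on i.i.d.\ threshold draws is therefore precisely the mechanism the paper invokes.

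Where you diverge slightly is in supplying a justification the paper omits: you attempt to \emph{derive} the additive decomposition through a triangle-type split $\|\mathbf{x}_{i}-\mathbf{x}^{\star}\|_{2}\le d(\mathbf{x}_{i},\mathcal{P}_{\mathbf{x}})+\operatorname{diam}(\mathcal{P}_{\mathbf{x}})$ and then define $\Upsilon(m)$ as the expected squared radius. The paper never carries out such a split; it simply declares that once $\mathcal{P}_{\mathbf{x}}\subseteq\mathcal{F}_{\mbX}$ the SKM rate applies and otherwise a penalty must be added, and it chooses $\Upsilon(m)=\Psi_{T}-\Psi_{\infty}$ directly from the MGF of $T_{\text{ave}}$. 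Your route is more honest but would, as you hint, introduce cross terms and constants (a factor of $2$ upon squaring) that (\ref{bound2}) does not display; since the paper treats (\ref{bound2}) as a proposed form rather than a sharp inequality, this discrepancy is cosmetic rather than a gap.
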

To propose an appropriate sample size-dependent penalty function, we will utilize the first theorem in \cite{9896984}, which studies the possibility of creating a finite-volume space around the optimal signal.
\subsection{Sample Size-Dependent Penalty Function via Moment Generating Functions}
\label{penalt}
We investigate the convergence of ORKA through a probabilistic lens. To do so, define the distance between the optimal point $\mathbf{x}^{\star}$ and the $j$-th hyperplane presented in (\ref{eq:80n}) as
\begin{equation}
\label{distance}
\begin{aligned}
d_{j}\left(\mathbf{x}^{\star},\boldsymbol{\uptau}^{(\ell)}\right) &= \left|r_{j}\odot\left(\mba_{j}\mathbf{x}^{\star}-\tau^{(\ell)}_{j}\right)\right|^{2},\quad j\in\left\{1,\cdots,m^{\prime}\right\},
\end{aligned}
\end{equation}
where $r_{j}\odot\mba_{j}$ is the $j$-th row of $\mbP$, $\mba_{j}$ is the $j$-th row of $\mbA$, and $\mba_{j}=\mba_{j+n}$. It is easy to observe that by generally reducing the distances between $\mathbf{x}^{\star}$ and the constraint-associated hyperplanes, the possibility of \emph{capturing} the optimal point is increased. For a specific sample size $m^{\prime}=m n$, when the volume of the finite space around the optimal point is reduced, the average of $\left\{d_{j}\left(\Tilde{\mathbf{x}}^{\star},\boldsymbol{\uptau}^{(\ell)}\right)\right\}$ is diminished as well. This average of distance can be written as \cite{leventhal2010randomized}:
\begin{equation}
\label{ave}
\begin{aligned}
T_{\text{ave}} =  \frac{1}{m^{\prime}}\sum^{m^{\prime}}_{j=1}d_{j}\left(\Tilde{\mathbf{x}}^{\star},\boldsymbol{\uptau}^{(\ell)}\right).
\end{aligned}
\end{equation}
The possibility of creating a finite-volume, and the importance of the number of samples in the recovery performance of RKA, can be captured by the \emph{Chernoff bound} as illustrated below.
\begin{theorem}[See\cite{9896984}]
\label{theorem_0}
Assume the distances $\left\{d_{j}\left(\Tilde{\mathbf{x}}^{\star},\boldsymbol{\uptau}^{(\ell)}\right)\right\}$ between the desired point $\Tilde{\mathbf{x}}^{\star}$ and the hyperplanes of the polyhedron defined in (\ref{eq:24}) are i.i.d. random variables. Then:
\begin{itemize}
    \item The Chernoff bound of $T_{\text{ave}}$ is given by
\begin{equation}
\label{eq:theorem_cher}
\operatorname{Pr}\left(\frac{1}{m^{\prime}}\sum^{m^{\prime}}_{j=1}d_{j}\left(\Tilde{\mathbf{x}}^{\star},\boldsymbol{\uptau}^{(\ell)}\right)\leq a\right)\geq 1-\inf_{t\geq 0}\frac{\Psi_{T}}{e^{t a}},
\end{equation}
where $a$ is an average distance point in space at which the finite-volume space around the desired signal is created, and $\Psi_{T}$ is the moment generating function (MGF) of the error recovery, given as
\begin{equation}
\label{eq:psi}
\Psi_{T} = \left(1+t\frac{\mu^{(1)}_{d_{j}}}{m^{\prime}}+\cdots+t^{\kappa}\frac{\mu^{(\kappa)}_{d_{j}}}{\kappa!m^{\prime\kappa}}+\mathcal{R}\left(m^{\prime}\right)\right)^{m^{\prime}},
\end{equation}
with $\mu^{(\kappa)}_{d_{j}}=\mathbb{E}\left\{d^{\kappa}_{j}\right\}$, and $\mathcal{R}$ denoting a bounded remainder associated with truncating the Taylor series expansion of $\Psi_{T}$.
\item $\Psi_{T}$ is decreasing with an increasing sample size in the sample abundance scenario, leading to an increasing lower bound in (\ref{eq:theorem_cher}).
\end{itemize}
\end{theorem}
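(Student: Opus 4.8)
The plan is to obtain the bound by Chernoff's exponential-moment method applied to the sample mean $T_{\text{ave}}=\frac{1}{m^{\prime}}\sum_{j=1}^{m^{\prime}}d_{j}$, and then to read off the asymptotic behavior in $m^{\prime}$ directly from the resulting power structure. First I would pass to the complementary event, writing $\operatorname{Pr}\!\left(T_{\text{ave}}\leq a\right)=1-\operatorname{Pr}\!\left(T_{\text{ave}}> a\right)$, so that it suffices to upper bound the right tail $\operatorname{Pr}\!\left(T_{\text{ave}}> a\right)$. For any $t\geq 0$, monotonicity of $x\mapsto e^{tx}$ together with Markov's inequality yields $\operatorname{Pr}\!\left(T_{\text{ave}}> a\right)\leq e^{-ta}\,\mathbb{E}\{e^{t T_{\text{ave}}}\}$, and taking the infimum over $t\geq 0$ produces the tightest such estimate. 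This is exactly the factor $\inf_{t\geq 0}\Psi_{T}/e^{ta}$ once we identify $\Psi_{T}$ with the moment generating function of $T_{\text{ave}}$, which establishes the displayed inequality.

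Next I would compute that MGF explicitly. Since $T_{\text{ave}}=\frac{1}{m^{\prime}}\sum_{j=1}^{m^{\prime}}d_{j}$ and the distances are assumed i.i.d., the expectation of the product factorizes, $\mathbb{E}\{e^{t T_{\text{ave}}}\}=\prod_{j=1}^{m^{\prime}}\mathbb{E}\{e^{(t/m^{\prime})d_{j}}\}=\left(\mathbb{E}\{e^{(t/m^{\prime})d_{1}}\}\right)^{m^{\prime}}$. Expanding the inner exponential in its Taylor series and interchanging expectation and summation (legitimate under the bounded-moment hypothesis) turns the single-term MGF into $1+t\mu^{(1)}_{d_{j}}/m^{\prime}+\cdots+t^{\kappa}\mu^{(\kappa)}_{d_{j}}/(\kappa!\,m^{\prime\kappa})+\mathcal{R}(m^{\prime})$, where $\mu^{(\kappa)}_{d_{j}}=\mathbb{E}\{d_{j}^{\kappa}\}$ are the raw moments and $\mathcal{R}(m^{\prime})$ is the Taylor remainder. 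Raising this base to the $m^{\prime}$-th power reproduces the stated form of $\Psi_{T}$, completing the first bullet.

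The harder part is the second claim, the decay of $\Psi_{T}$ as the sample size grows. Here the plan is to examine the base $b_{m^{\prime}}=1+t\mu^{(1)}_{d_{j}}/m^{\prime}+\mathcal{O}(1/m^{\prime 2})$ and the exponent $m^{\prime}$ jointly, using the limiting behavior $\left(1+c/m^{\prime}\right)^{m^{\prime}}\to e^{c}$ so that $\Psi_{T}\approx\exp\!\left(t\mu^{(1)}_{d_{j}}\right)$ to leading order. The crucial observation is that in the sample-abundance regime the shrinking of the one-bit polyhedron $\mathcal{P}_{\mathbf{x}}$ forces each distance $d_{j}$, and hence every moment $\mu^{(\kappa)}_{d_{j}}$, to contract with increasing $m^{\prime}$; consequently the effective exponent governing $\Psi_{T}$ decreases even as the number of multiplicative factors grows, which drives the lower bound $1-\inf_{t\geq 0}\Psi_{T}/e^{ta}$ upward.

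The main obstacle is precisely this competition between the decaying per-sample moments and the accumulating product over $m^{\prime}$ factors: one must show that the contraction of $\mu^{(\kappa)}_{d_{j}}$ dominates, and simultaneously verify that the truncation remainder $\mathcal{R}(m^{\prime})$ stays bounded and of lower order so that it cannot disrupt the monotone trend. Making the dependence of the moments $\mu^{(\kappa)}_{d_{j}}$ on the sample size quantitative, and controlling $\mathcal{R}(m^{\prime})$, is where the genuine work lies; by contrast, the Chernoff derivation of the first bullet is otherwise routine, and I would lean on the corresponding result in \cite{9896984} for the detailed moment estimates.
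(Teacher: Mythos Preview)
Your derivation of the first bullet is the standard Chernoff argument and matches what the paper intends; note, however, that the paper does not actually prove this theorem at all but simply imports it from the cited reference, so there is no in-paper proof to compare against beyond the surrounding discussion.

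There is a genuine gap in your argument for the second bullet. You claim that ``the shrinking of the one-bit polyhedron $\mathcal{P}_{\mathbf{x}}$ forces each distance $d_{j}$, and hence every moment $\mu^{(\kappa)}_{d_{j}}$, to contract with increasing $m^{\prime}$.'' This is not correct: by the definition $d_{j}=\left|r_{j}\odot(\mathbf{a}_{j}\mathbf{x}^{\star}-\uptau_{j}^{(\ell)})\right|^{2}$, each $d_{j}$ depends only on the $j$-th row and the $j$-th threshold, not on how many other hyperplanes have been drawn. Under the i.i.d.\ hypothesis the raw moments $\mu^{(\kappa)}_{d_{j}}$ are therefore fixed constants, independent of $m^{\prime}$. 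The polyhedron shrinks because you are \emph{intersecting} more half-spaces, not because any individual hyperplane moves closer to $\mathbf{x}^{\star}$.

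The actual mechanism for the eventual decrease of $\Psi_{T}$ lives entirely in the higher-order terms of the base, with the moments held fixed. Writing $\log\Psi_{T}=m^{\prime}\log\!\left(1+\tfrac{t\mu^{(1)}_{d_{j}}}{m^{\prime}}+\tfrac{t^{2}\mu^{(2)}_{d_{j}}}{2m^{\prime 2}}+\cdots\right)$ and expanding gives $\log\Psi_{T}=t\mu^{(1)}_{d_{j}}+\tfrac{t^{2}}{2m^{\prime}}\!\left(\mu^{(2)}_{d_{j}}-(\mu^{(1)}_{d_{j}})^{2}\right)+\mathcal{O}(1/m^{\prime 2})$, so that $\Psi_{T}\downarrow e^{t\mu^{(1)}_{d_{j}}}$ because the coefficient of $1/m^{\prime}$ is $\tfrac{t^{2}}{2}\operatorname{Var}(d_{j})>0$. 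This is precisely the two-regime behavior the paper describes just after the theorem (an initial increase up to some $m^{\star}$, then a monotone decrease), and it is what motivates their Pad\'e model with $u=t\mu^{(1)}_{d_{j}}$ and $v=\tfrac{t^{2}}{2}\mu^{(2)}_{d_{j}}$. Your proposal should replace the ``contracting moments'' heuristic with this second-order expansion; otherwise the monotonicity claim is unsupported.
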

The MGF $\Psi_{T}$ contains two parts. The first part has an increasing trend until a specific sample size $m^{\star}$, which indicates the existence of an abundant number of samples. After that, the function has a decreasing behavior. Therefore, $\Psi_{T}$ with $m\geq m^{\star}$ can be a good choice for a sample size-dependent penalty function. Particularly, the penalty function can be chosen as $\Psi_{T}-\Psi_{\infty}$, where $\Psi_{\infty}=\lim_{m\rightarrow \infty} \Psi_{T}$, to ensure $\Upsilon(m)\rightarrow 0$ as $m\rightarrow \infty$. 

Since, we do not have access to the probability density function of $\left\{d_{j}\right\}$, thus, the MGF must be evaluated by the truncated Taylor series expansion, which may be accurately approximated by a rational function such as a \emph{Padé approximation} (PA). The decreasing part of $\Psi_{T}$ in $m> m^{\star}$ with PA is modeled as follows\cite{eamaz2021modified,eamaz2022covariance}:
\begin{equation}
\label{pade_1}
\begin{aligned}
\Upsilon(m)&\asymp \left(1+\cdots+t^{\kappa}\frac{\mu^{(\kappa)}_{d_{j}}}{\kappa!m^{\prime\kappa}}\right)^{m^{\prime}}-\Psi_{T},\\
&=\frac{a_{0}+\frac{a_{1}}{m}}{b_{0}+\frac{b_{1}}{m}}-\frac{a_{0}}{b_{0}},
\end{aligned}
\end{equation}
where $\left\{a_{0}, a_{1}, b_{0}, b_{1}\right\}$ are the PA coefficients as given by
\begin{equation}
\label{eq:coeff}
\begin{aligned}
a_{0}&=e^{u}\left(12u^{2}-24v\right),\\a_{1}&=e^{u}\left(-3u^{4}+8u^{3}+12u^{2}v-24uv-12v^{2}\right),\\b_{0}&=12u^{2}-24v,\\b_{1}&=3u^{4}+8u^{3}-12u^{2}v-24uv+12v^{2},
\end{aligned}
\end{equation}
where $u=\mu_{d_{j}}^{(1)}t$ and $v=\frac{\mu_{d_{j}}^{(2)}t^{2}}{2}$.

\section{Judicious Sampling With Adaptive Thresholding for ORKA}
\label{ada_thresh}
By the spirit of using the iterative RKA, a suitable time-varying threshold can be selected in order to enhance the recovery performance. In ORKA, we face a highly overdetermined linear feasibility problem creating a finite-volume space. To capture the desired signal $\mathbf{x}^{\star}$ more efficiently, the right-hand side of the inequalities in (\ref{eq:80n}), i.e. $\operatorname{vec}\left(\mbR\right)\odot \operatorname{vec}\left(\bGamma\right)$, must be determined in a way that each associated hyperplane passes through the desired feasible region within $\mathcal{F}_{\mbX}$. Therefore, an algorithm is proposed to ensure that this occurs in practice.

We propose an iterative algorithm generating an adaptive sampling threshold to accurately obtain the desired solution. To have the smaller area of the finite-volume space around the desired signal $\mathbf{x}^{\star}$, one can somehow choose thresholds to reduce distances between them and the desired point. To do so, we update the time-varying threshold for $\ell\in\left\{1,\cdots,m\right\}$ as
\begin{equation}
\label{eq:200}
\mbA\mathbf{x}_{k}-\mbr^{(\ell)}_{k}\odot\bepsilon^{(\ell)}_{k}= \boldsymbol{\uptau}^{(\ell)}_{k+1},
\end{equation}
where $\bepsilon^{(\ell)}_{k}$ are positive vectors in the $k$-th iteration of the algorithm. This updating process is based on
\begin{equation}
\label{akhund}
r_{j}=\begin{cases} +1 &\mba_{j}\mathbf{x}>\uptau_{j}, \\ -1&\mba_{j}\mathbf{x}<\uptau_{j},\end{cases}
\end{equation}
where $\{\mba_{j}\}$ are the rows of $\mbA$.
The one-bit measurements $\left\{\mbr^{(\ell)}_{k}\right\}$ are updated in the way to satisfy (\ref{eq:80n}) in iteration $k$, i.e. the inequalities $\Omega^{(\ell)}_{k} \mbA\mathbf{x}^{\star}\geq \mbr^{(\ell)}_{k}\odot\boldsymbol{\uptau}^{(\ell)}_{k}$. The reason behind this updating is to ensure that each halfspace associated with a threshold in iteration $k$ is getting closer to the optimal point in the correct direction which means the main side of the halfspace is forced to cover the optimal solution.    

\begin{proposition}[ORKA with Adaptive Thresholding]
\label{adaptive}
Consider applying ORKA to a linear feasibility problem $\mbA\mathbf{x}= \mathbf{y}$ as part of the linear constraints of (\ref{eq:1nnnn}). Suppose the initial time-varying threshold sequences are $\left\{\boldsymbol{\uptau}^{(\ell)}_{0}\right\}\sim \mathcal{N}\left(0,1\right)$ (with the same length as $\mbr^{(\ell)}$), and $\left\{\delta^{(\ell)}\right\}$ are positive numbers. Also, $\mathbf{x}_{k}$, $\boldsymbol{\uptau}^{(\ell)}_{k}$, $\mbr^{(\ell)}_{k}$ and $\bepsilon_{k}$ denote their associated values at iteration $k$. The proposed sampling algorithm is summarized as follows:
\begin{enumerate}
    \item Find a point inside the following polyhedron with proposed accelerated Kaczmarz algorithms, i.e. the PrSKM or the block SKM for $\boldsymbol{\uptau}=\left\{\boldsymbol{\uptau}^{(\ell)}_{k}\right\}$:
    \begin{equation}
    \mathcal{P}_{k}=\left\{\mathbf{x}_{k} \mid \Tilde{\bOmega}_{k}\mbA\mathbf{x}_{k}\succeq\mbb_{k}\right\},  
    \end{equation}
    where $\mbb_{k}=\operatorname{vec}\left(\mbR_{k}\right)\odot \operatorname{vec}\left(\bGamma_{k}\right)$,  $\mbR_{k}$ and $\bGamma_{k}$ are matrices with $\left\{\mbr^{(\ell)}_{k}\right\}$ and $\left\{\boldsymbol{\uptau}^{(\ell)}_{k}\right\}$ representing their columns, respectively.
    \item Update $\bGamma_{k+1}$ as:
    \begin{equation}
    \operatorname{vec}\left(\mbR_{k}\right)\odot \operatorname{vec}\left(\bGamma_{k+1}\right)=\Tilde{\bOmega}_{k}\mbA\mathbf{x}_{k}-\frac{\bepsilon_{k}}{2}.
    \end{equation}
    \item Compute $\bepsilon_{k}$, a block vector containing $\left\{\bepsilon^{(\ell)}_{k}\right\}$, as:
    \begin{equation}
    \bepsilon_{k}=\Tilde{\bOmega}_{k}\mbA\mathbf{x}_{k}-\operatorname{vec}\left(\mbR_{k}\right)\odot \operatorname{vec}\left(\bGamma_{k}\right).    
    \end{equation}
    \item Update $\mbR_{k+1}$ based on (\ref{akhund}):
    \begin{equation}
    \mbr^{(\ell)}_{k+1}=\operatorname{sgn}\left(\mathbf{y}-\boldsymbol{\uptau}^{(\ell)}_{k+1}\right),\quad \ell \in \left\{1,\cdots,m\right\}.
    \end{equation}
    \item Increase $k$ by one.
    \item Stop when $\left\|\boldsymbol{\uptau}^{(\ell)}_{k+1}-\boldsymbol{\uptau}^{(\ell)}_{k}\right\|_{2}\leq \delta^{(\ell)}$.
\end{enumerate}
\end{proposition}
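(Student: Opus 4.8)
The proposition specifies an algorithm, so what needs justification is its correctness: that the adaptive scheme (i) keeps the target $\mathbf{x}^{\star}$ inside $\mathcal{P}_{k}$ at every iteration, (ii) drives each hyperplane toward $\mathbf{x}^{\star}$, and (iii) halts. The plan is to treat (i) and (ii) as loop invariants maintained by Steps~2--4 and then read off termination from a geometric decay of the slacks $\bepsilon_{k}$.

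First I would extract the algebraic content of the update. Since the point returned in Step~1 satisfies $\mathbf{x}_{k}\in\mathcal{P}_{k}$, Step~3 gives $\bepsilon_{k}=\Tilde{\bOmega}_{k}\mbA\mathbf{x}_{k}-\mbb_{k}\succeq\mathbf{0}$, which already certifies that the $\bepsilon^{(\ell)}_{k}$ are the nonnegative (``positive'') slack vectors claimed. Writing $\Tilde{\bOmega}_{k}\mbA\mathbf{x}_{k}=\mbb_{k}+\bepsilon_{k}$ in Step~2 and peeling off the per-block sign with $\mbr^{(\ell)}_{k}\odot\mbr^{(\ell)}_{k}=\mathbf{1}$, I expect the clean midpoint rule
\begin{equation}
\boldsymbol{\uptau}^{(\ell)}_{k+1}=\frac{1}{2}\left(\mbA\mathbf{x}_{k}+\boldsymbol{\uptau}^{(\ell)}_{k}\right),\qquad \boldsymbol{\uptau}^{(\ell)}_{k+1}-\boldsymbol{\uptau}^{(\ell)}_{k}=\frac{1}{2}\,\mbr^{(\ell)}_{k}\odot\bepsilon^{(\ell)}_{k}.
\end{equation}
Because the entries of $\mbr^{(\ell)}_{k}$ are $\pm1$, this yields $\left\|\boldsymbol{\uptau}^{(\ell)}_{k+1}-\boldsymbol{\uptau}^{(\ell)}_{k}\right\|_{2}=\frac{1}{2}\left\|\bepsilon^{(\ell)}_{k}\right\|_{2}$, so the Step~6 test $\left\|\boldsymbol{\uptau}^{(\ell)}_{k+1}-\boldsymbol{\uptau}^{(\ell)}_{k}\right\|_{2}\leq\delta^{(\ell)}$ is exactly the event $\left\|\bepsilon^{(\ell)}_{k}\right\|_{2}\leq 2\delta^{(\ell)}$; this is the bridge I would use between shrinkage and halting.

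For the covering invariant I would use that Step~4 recomputes $\mbr^{(\ell)}_{k+1}=\operatorname{sgn}\!\left(\mathbf{y}-\boldsymbol{\uptau}^{(\ell)}_{k+1}\right)$ with $\mathbf{y}=\mbA\mathbf{x}^{\star}$. Then $\mbr^{(\ell)}_{k+1}\odot\left(\mbA\mathbf{x}^{\star}-\boldsymbol{\uptau}^{(\ell)}_{k+1}\right)=\left|\mbA\mathbf{x}^{\star}-\boldsymbol{\uptau}^{(\ell)}_{k+1}\right|\succeq\mathbf{0}$ row by row, giving $\mathbf{x}^{\star}\in\mathcal{P}_{k+1}$ for all $k$ --- the precise sense in which the ``main side'' of each halfspace is forced to cover the optimum, holding whether or not a sign flips. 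For shrinkage I would insert the midpoint rule into the distance of (\ref{distance}) to obtain $\mba_{j}\mathbf{x}^{\star}-\tau^{(\ell)}_{k+1,j}=\frac{1}{2}\left(\mba_{j}\mathbf{x}^{\star}-\tau^{(\ell)}_{k,j}\right)+\frac{1}{2}\mba_{j}\!\left(\mathbf{x}^{\star}-\mathbf{x}_{k}\right)$, so the distance to the $j$-th hyperplane is at most half the previous distance plus $\frac{1}{2}\left|\mba_{j}\!\left(\mathbf{x}^{\star}-\mathbf{x}_{k}\right)\right|$.

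The hard part will be turning this into a genuine contraction, since the solver only guarantees $\mathbf{x}_{k}\in\mathcal{P}_{k}$ (not $\mathbf{x}_{k}=\mathbf{x}^{\star}$) and a flipped sign swaps a halfspace for its complement, so a naive nesting $\mathcal{P}_{k+1}\subseteq\mathcal{P}_{k}$ need not hold. I would close the argument by introducing the radius $\rho_{k}=\sup_{\mathbf{x}\in\mathcal{P}_{k}}\left\|\mathbf{x}-\mathbf{x}^{\star}\right\|_{2}$, bounding $\left|\mba_{j}\!\left(\mathbf{x}^{\star}-\mathbf{x}_{k}\right)\right|\leq\left\|\mba_{j}\right\|_{2}\rho_{k}$, and relating $\rho_{k+1}$ back to the updated hyperplane distances through the conditioning of $\mbA$ (the same $\sigma_{\min}\!\left(\mbA\right)$ that fixes $\kappa\!\left(\mbP\right)$ in Theorem~\ref{theorem_2}), which presumes the thresholds already enclose a bounded cell. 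This should deliver a recursion $\rho_{k+1}\leq c\,\rho_{k}$ with $c\in(0,1)$; then $\rho_{k}\to0$ geometrically, $\left\|\bepsilon^{(\ell)}_{k}\right\|_{2}\to0$, and the Step~6 criterion triggers in finitely many iterations. The flipped-sign rows only help here, since there $\mba_{j}\mathbf{x}^{\star}$ lies between $\tau^{(\ell)}_{k,j}$ and $\tau^{(\ell)}_{k+1,j}$, forcing $\left|\mba_{j}\mathbf{x}^{\star}-\tau^{(\ell)}_{k+1,j}\right|\leq\frac{1}{2}\left|\bepsilon^{(\ell)}_{k,j}\right|$.
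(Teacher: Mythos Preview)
The paper does not actually prove this proposition: it is stated purely as an algorithm specification, with only the heuristic remark afterward that ``smaller values of $\{d_{j}\}$ will emerge'' so that $\Psi_{T}$ and hence the required sample size decrease. There is no formal argument for correctness, shrinkage, or termination anywhere in the paper. Your attempt therefore goes considerably beyond what the authors provide.

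Your algebraic extractions are correct and genuinely useful. The midpoint rule $\boldsymbol{\uptau}^{(\ell)}_{k+1}=\tfrac{1}{2}\bigl(\mbA\mathbf{x}_{k}+\boldsymbol{\uptau}^{(\ell)}_{k}\bigr)$ follows exactly as you describe from Steps~2--3 after cancelling $\mbr^{(\ell)}_{k}\odot\mbr^{(\ell)}_{k}=\mathbf{1}$, and the identification $\|\boldsymbol{\uptau}^{(\ell)}_{k+1}-\boldsymbol{\uptau}^{(\ell)}_{k}\|_{2}=\tfrac{1}{2}\|\bepsilon^{(\ell)}_{k}\|_{2}$ cleanly explains the stopping criterion. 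The covering invariant $\mathbf{x}^{\star}\in\mathcal{P}_{k+1}$ is also right and is precisely what the paper's informal sentence about ``the main side of the halfspace'' is pointing at.

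The contraction argument, however, has a real gap that you flag but do not close. Your radius $\rho_{k}=\sup_{\mathbf{x}\in\mathcal{P}_{k}}\|\mathbf{x}-\mathbf{x}^{\star}\|_{2}$ need not be finite: nothing in the construction forces $\mathcal{P}_{k}$ to be bounded, and your own caveat ``which presumes the thresholds already enclose a bounded cell'' is doing essential work you never justify. Even granting boundedness, the step from your per-row inequality to $\rho_{k+1}\leq c\,\rho_{k}$ with $c<1$ is not carried out: passing from hyperplane slacks back to a polyhedral radius requires a Hoffman-type constant that depends on which rows are active, and combining the $\tfrac{1}{2}$ contraction of the first term with the $\tfrac{1}{2}\|\mba_{j}\|_{2}\rho_{k}$ residual does not obviously yield a factor strictly below one. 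Since the paper offers no proof at all, your derivations of the midpoint rule and covering invariant already exceed what is there; but the termination and shrinkage claims remain, as in the paper, heuristic rather than proved.
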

One can observe that by deploying this adaptive thresholding algorithm, smaller values of $\{d_{j}\}$ will emerge which leads to their moments $\left\{\mu^{(\kappa)}_{d_{j}}\right\}$ to further diminish. Therefore, $\Psi_{T}$ is smaller in this scenario and a smaller number of time-varying sampling threshold sequences can be utilized in ORKA with similar recovery performance. Additionally, non-informative sampling thresholds, which appear as extra inequality constraints in the random time-varying sampling thresholds scenario, may be efficiently removed by choosing the adaptive thresholds with closer hyperplanes to the desired point.
\begin{figure*}[t]
	\centering
	\begin{subfigure}[b]{0.32\textwidth}
		\includegraphics[width=1\linewidth]{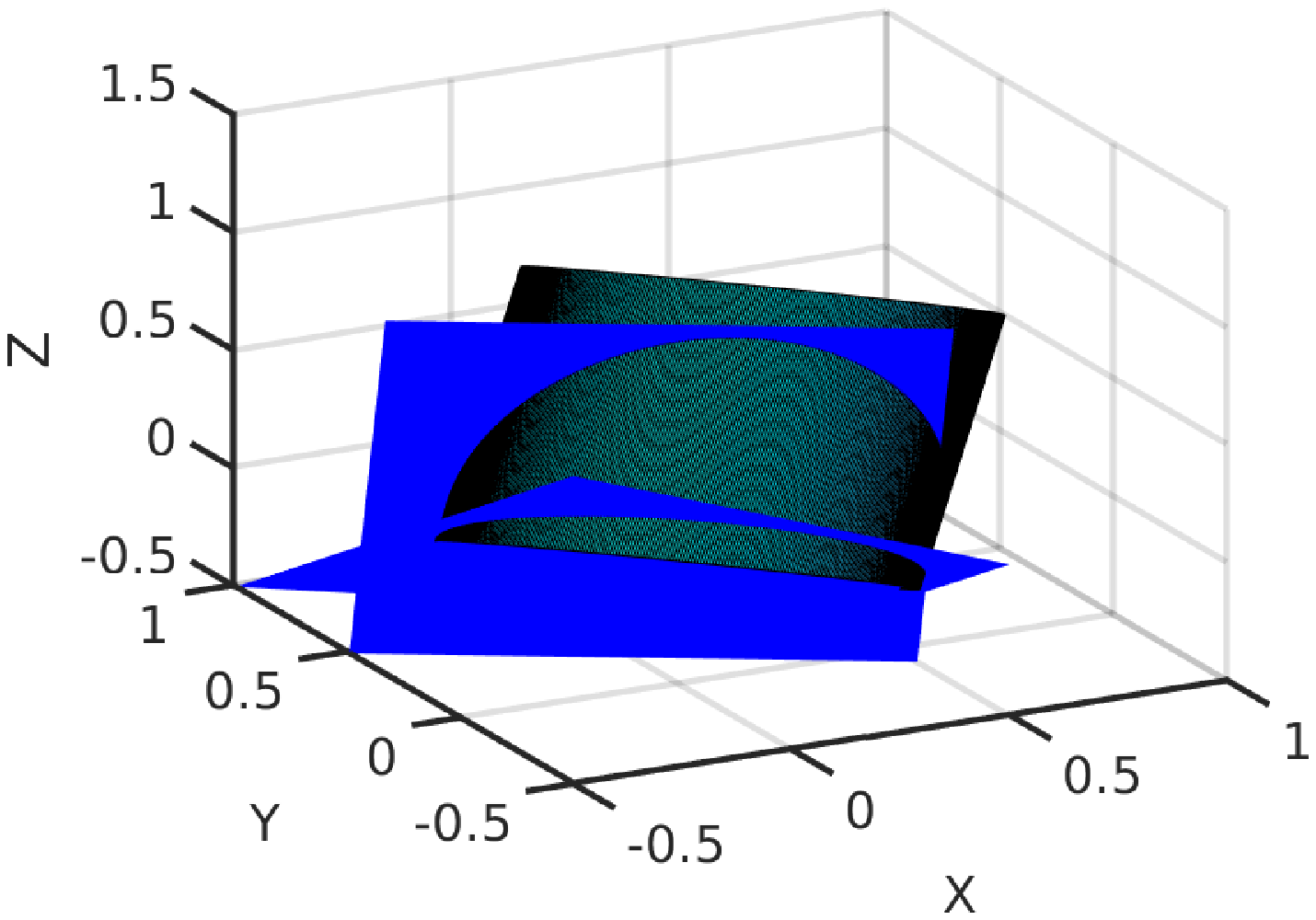}
		\caption{$m=2$}
	\end{subfigure}
	\begin{subfigure}[b]{0.32\textwidth}
		\includegraphics[width=1\linewidth]{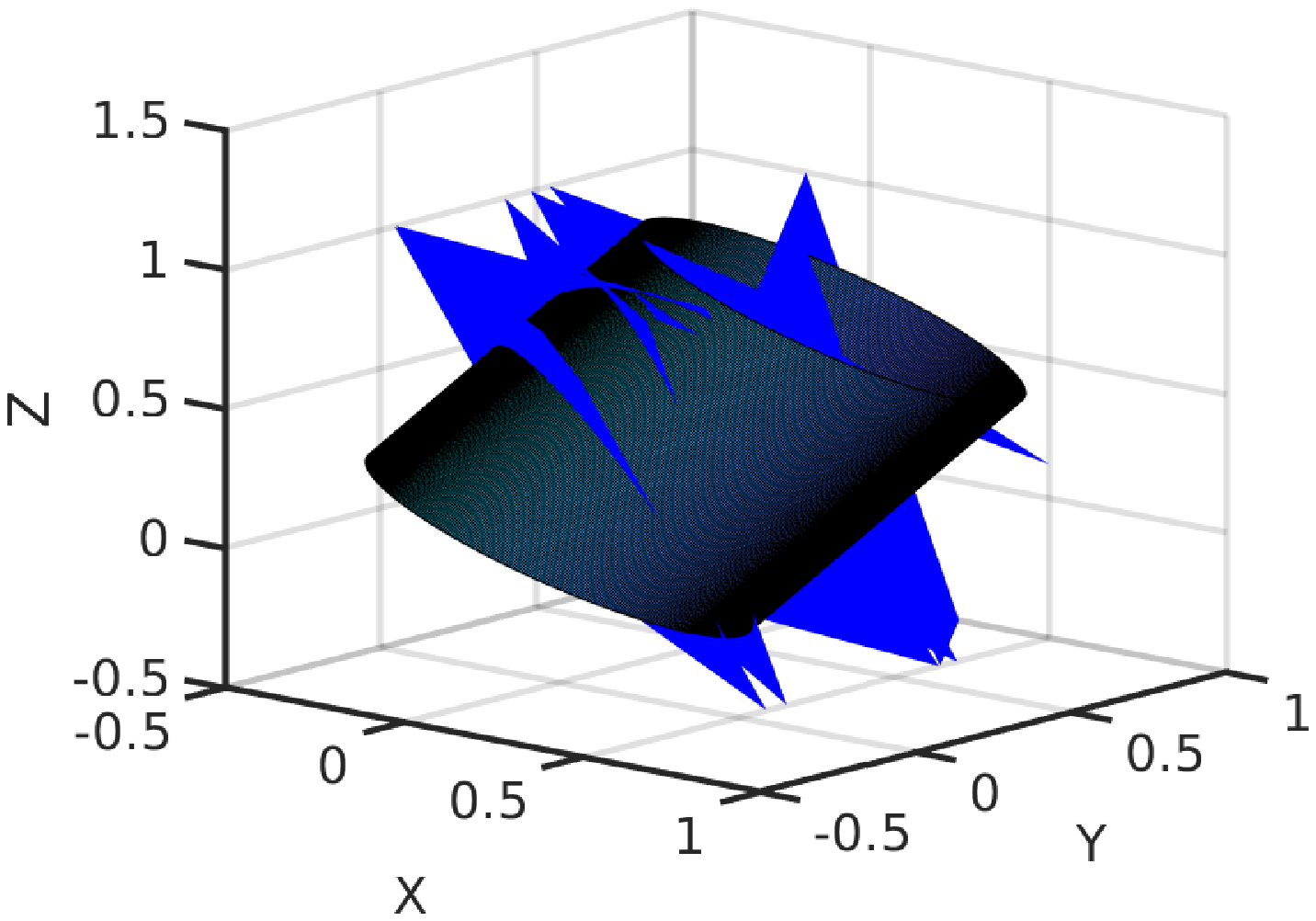}
		\caption{$m=6$}
	\end{subfigure}
	\begin{subfigure}[b]{0.32\textwidth}
		\includegraphics[width=1\linewidth]{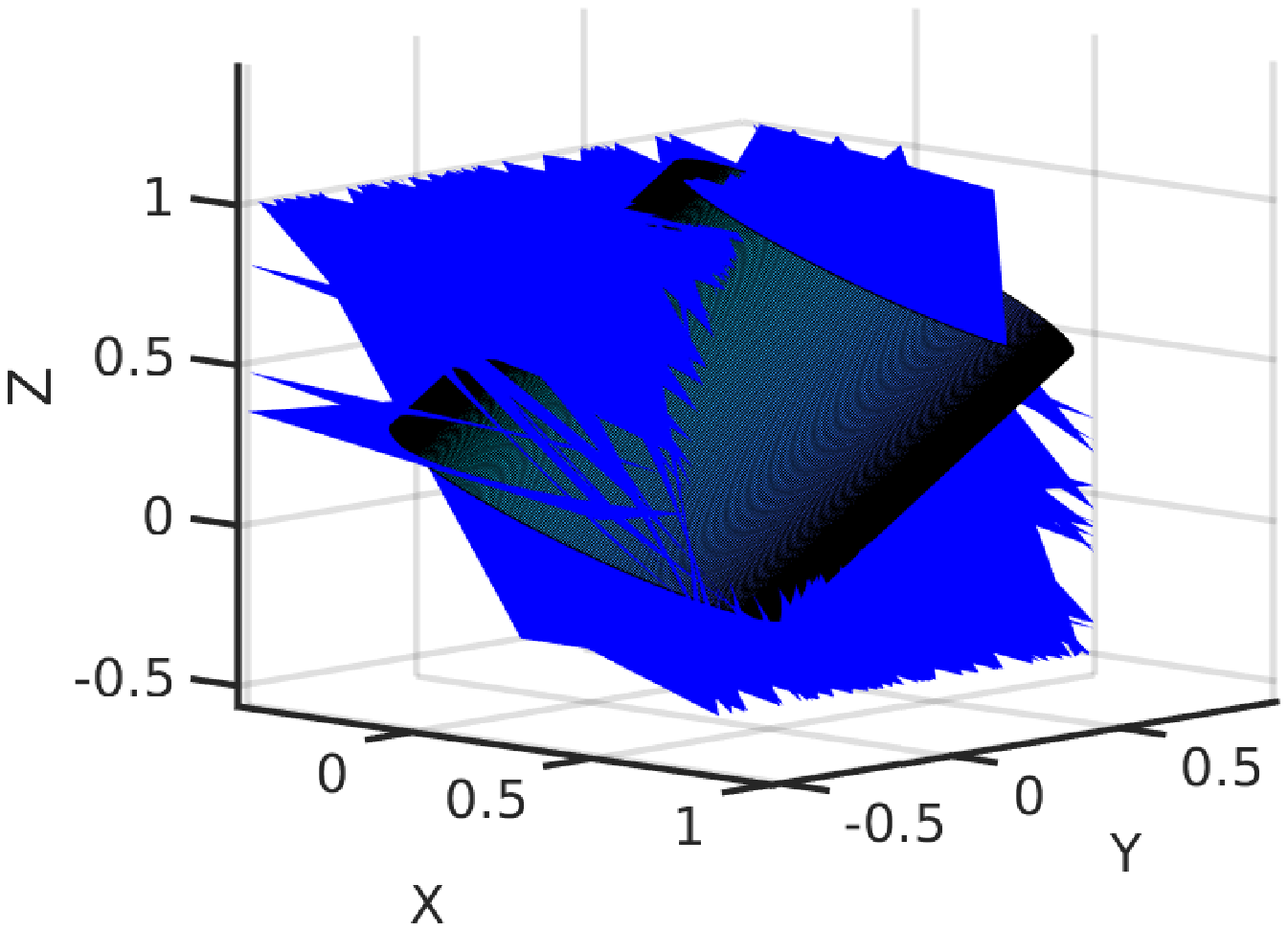}
		\caption{$m=60$}
	\end{subfigure}
		\begin{subfigure}[b]{0.32\textwidth}
		\includegraphics[width=0.8\linewidth]{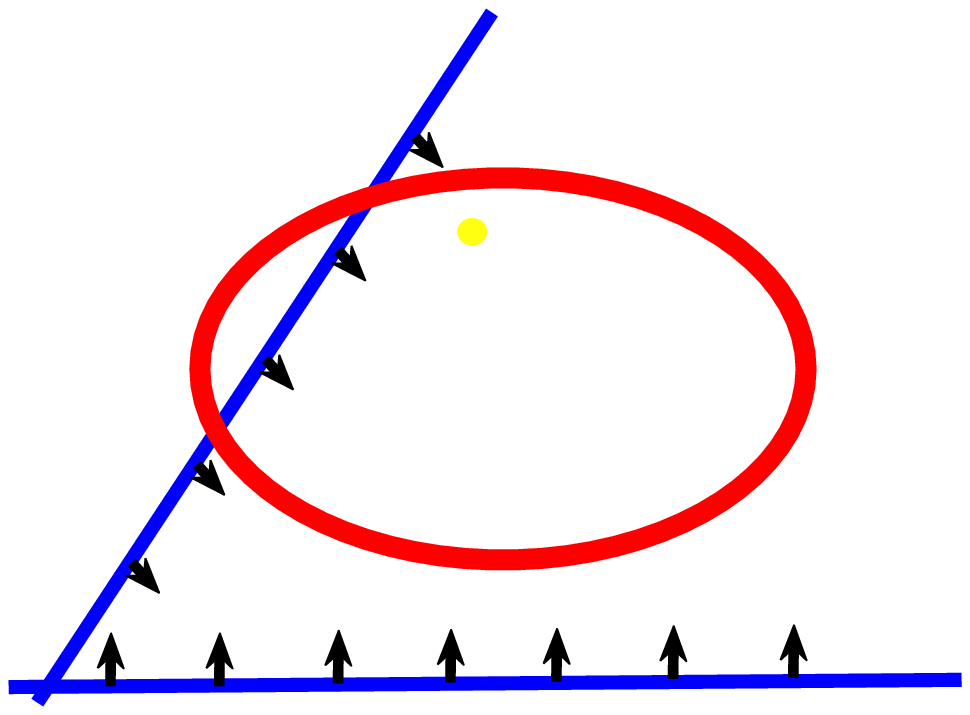}
		\caption{$m=2$}
	\end{subfigure}
		\begin{subfigure}[b]{0.32\textwidth}
		\includegraphics[width=0.8\linewidth]{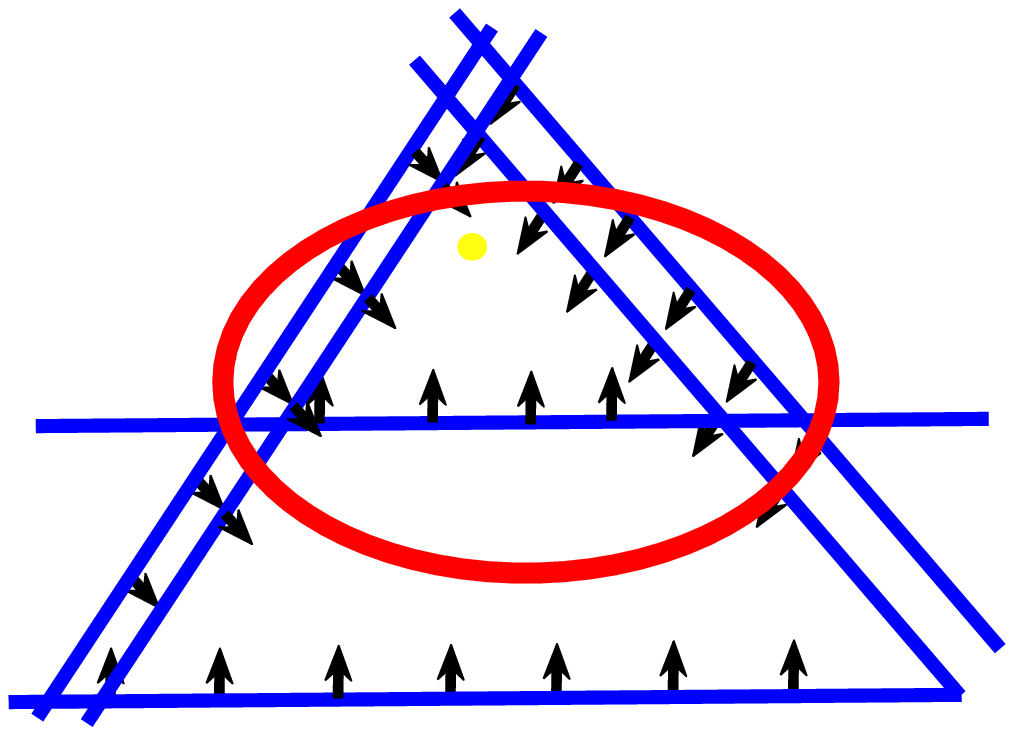}
		\caption{$m=6$}
	\end{subfigure}
		\begin{subfigure}[b]{0.32\textwidth}
		\includegraphics[width=0.8\linewidth]{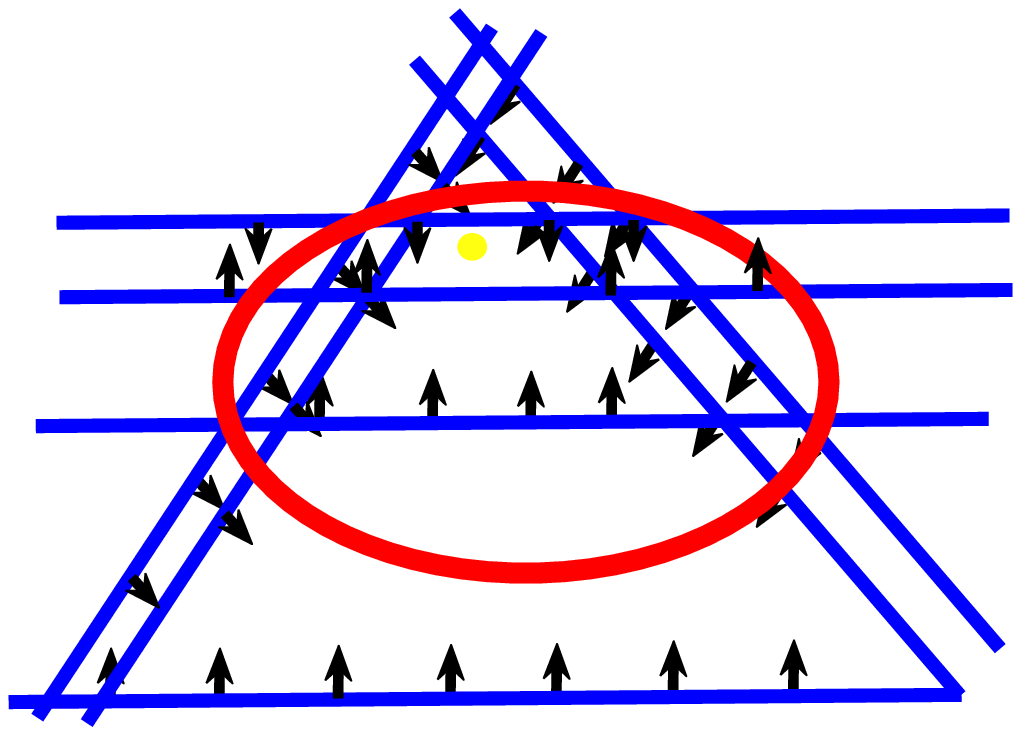}
		\caption{$m=60$}
	\end{subfigure}
	
	\caption{Shrinkage of the one-bit polyhedron (\ref{eq:80n}) in blue, ultimately placed within the unit ball of the nuclear norm $\left\|\mbX\right\|_{\star}\leq 1$ shown with black cylindrical region and its red contours, when the number constraints (samples) grows large. The arrows point to the half-space associated with each inequality constraint. The evolution of the feasible regime is depicted with increasing samples in three cases: (a) and (d) small sample-size regime, constraints not forming a finite-value polyhedron; (b) and (e) medium sample-size regime, constraints forming a finite-volume polyhedron, parts of which are outside the cylinder; (c) and (f) large sample-size regime, constraints forming a finite-volume polyhedron inside the nuclear norm cylinder, making its constraint redundant. The optimal point representing the signal to be recovered is shown by yellow.}
	
	\label{figure_1n}
\end{figure*}
\section{Low-Rank Matrix Recovery Via ORKA}
\label{MATRIX}
As mentioned earlier, low-rank matrix recovery is an excellent example for problems that assume the form in (\ref{eq:1nnnn}), and that can be tackled using our methodology. In this section, at first, we will briefly introduce the \emph{nuclear norm minimization} form of the problem. Accordingly, we will apply ORKA to this problem without considering the associated costly constraints. At the end, the recovery performance of ORKA will be numerically evaluated considering different matrix ranks and sample sizes to investigate the existence of a sample abundance scenario. 
\subsection{Problem Formulation}
\label{pr}
The problem of the low-rank matrix recovery can be formulated as:
\begin{equation}
\label{eq:1nnnnn}
\begin{aligned}
\text{find}\quad &\mbX\\
\text{s.t.} \quad &\mathcal{A}\left(\mbX\right)=\mathbf{y}, \\
& \operatorname{rank}\left(\mbX\right)\leq M,\\
&\mbX \in \Omega_{c},
\end{aligned}
\end{equation}
where $\mbX\in \mathbb{C}^{n_{1}\times n_{2}}$ is the matrix of unknowns, $\mathbf{y}\in \mathbb{R}^{n}$ is the measurement vector, and $\mathcal{A}$ is a linear transformation mapping $n_{1}\times n_{2}$ into $\mathbb{R}^{n}$. In general, $\Omega_{c}$ can be chosen such as the set of semi-definite matrices, symmetric matrices, upper or lower triangle matrices, Hessenberg matrices and a specific constraint on the matrix elements $\left\|\mbX\right\|_{\infty}\leq \alpha$ or on its eigenvalues, i.e., $\lambda_{i}\leq \epsilon$ where $\left\{\lambda_{i}\right\}$ are eigenvalues of $\mbX$ \cite{davenport2016overview,candes2015phase,van1996matrix}.

The problem (\ref{eq:1nnnnn}) can be rewritten as an optimization problem:
\begin{equation}
\label{eq:1nnnnnn}
\begin{aligned}
\min_{\mbX}\quad & \operatorname{rank}\left(\mbX\right)\\
\text{s.t.} \quad &\mathcal{A}\left(\mbX\right)=\mathbf{y}, \\
&\mbX \in \Omega_{c}.
\end{aligned}
\end{equation}
This problem is known to be NP-hard, whose solution is difficult to approximate \cite{meka2008rank,recht2011null}. Recall that the rank of $\mbX$ is equal to the number of nonzero singular values. In the
case when the singular values are all equal to one, the sum of the singular values is equal to the rank. When the singular values are less than or equal to one, the sum of
the singular values is a convex function that is strictly less than the rank. Therefore, it is been popular for this problem to replace the rank function with
the sum of the singular values of $\mbX$; i.e., its nuclear norm. The nuclear norm minimization alternative of the problem is given by \cite{cai2010singular,recht2010guaranteed,recht2011null}:
\begin{equation}
\label{eq:1nnnnnnn}
\begin{aligned}
\min_{\mbX}\quad & \left\|\mbX\right\|_{\star}\\
\text{s.t.} \quad &\mathcal{A}\left(\mbX\right)=\mathbf{y},\\
&\mbX \in \Omega_{c}.
\end{aligned}
\end{equation}
In this problem, the feasible set $\mathcal{F}_{\mbX}$ is obtained as
\begin{equation}
\label{gongen}
\mathcal{F}_{\mbX}=\left\{\mathcal{P}^{\star}\cap \Omega_{c}\right\},
\end{equation}
where $\mathcal{P}^{\star}$ is defined as follows
\begin{equation}
\label{stephanie}
\mathcal{P}^{\star} = \left\{\mbX \mid
\left\|\mbX\right\|_{\star}\leq \tau\right\},\quad
\tau\in\mathbb{R}^{+}.
\end{equation}
Next, we will apply ORKA to (\ref{eq:1nnnnnn}) to make its costly constraints redundant by using abundant number of time-varying sampling thresholds $m$.

A numerical investigation of (\ref{eq:80n}) when it is achieved for the nuclear norm minimization, reveals that by increasing the number of time-varying sampling threshold sequences $m$, the space formed by the intersection of half-spaces (inequality constraints) can fully shrink to the desired signal $\mbX^{\star}$ inside the feasible region of (\ref{stephanie}) which is shown by the cylindrical space \cite{recht2011null}---see Fig.~\ref{figure_1n} for an illustrative example of this phenomenon. As can be seen in this figure, the blue lines displaying the linear feasibility form a finite-volume space around the optimal point displayed by the yellow circle inside the cylinder (the elliptical region) by growing the number of threshold sequences or one-bit samples. In (a)/(d), constraints are not enough to create a finite-volume space, whereas in (b)/(e) such constraints can create the desired finite-volume polyhedron space which, however, is not fully inside the cylinder. Lastly, in (c)/(f), the created finite-volume space shrinks to be fully inside the cylinder.
\begin{figure*}[t]
	\centering
	\begin{subfigure}[b]{0.45\textwidth}
		\includegraphics[width=1\linewidth]{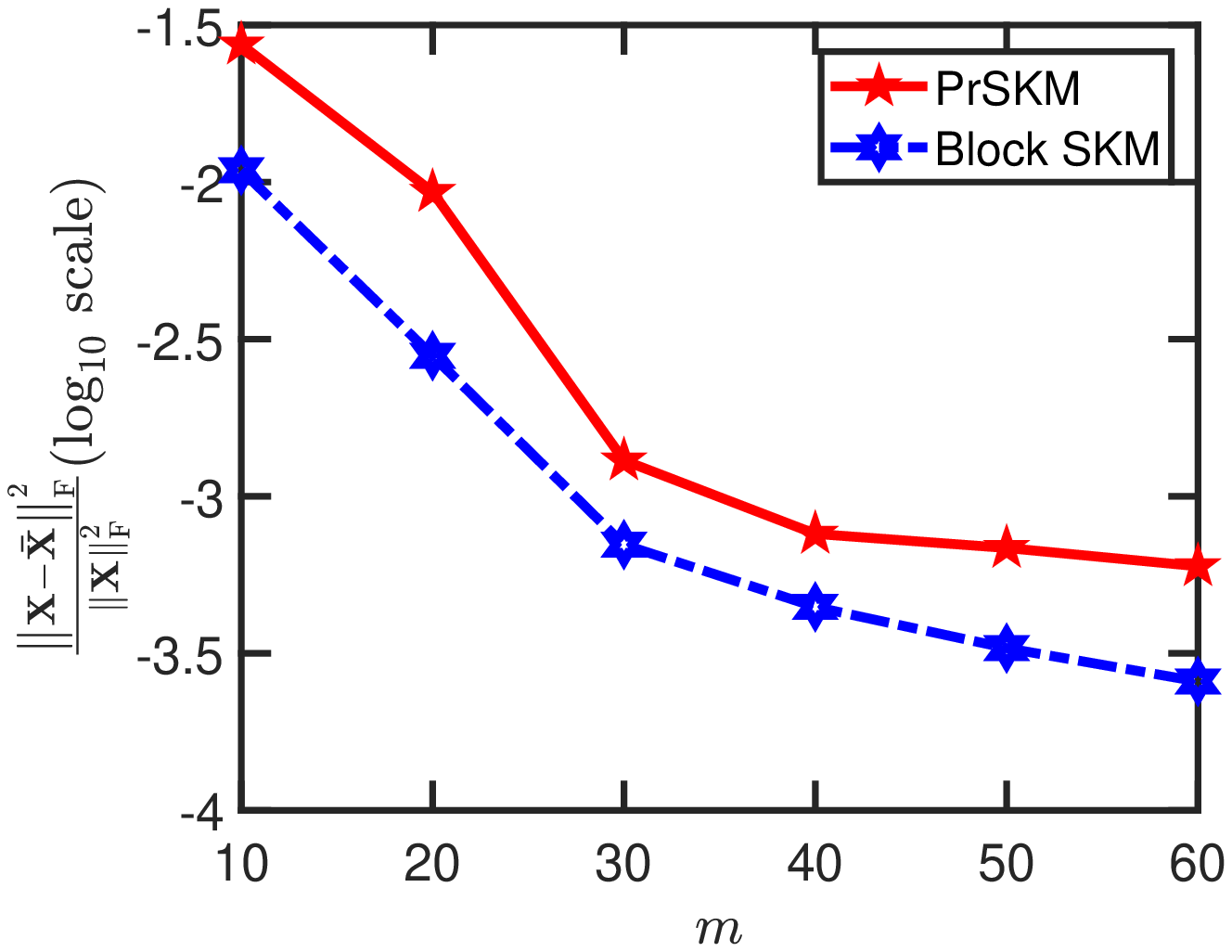}
		\caption{}
	\end{subfigure}
	\begin{subfigure}[b]{0.45\textwidth}
		\includegraphics[width=1\linewidth]{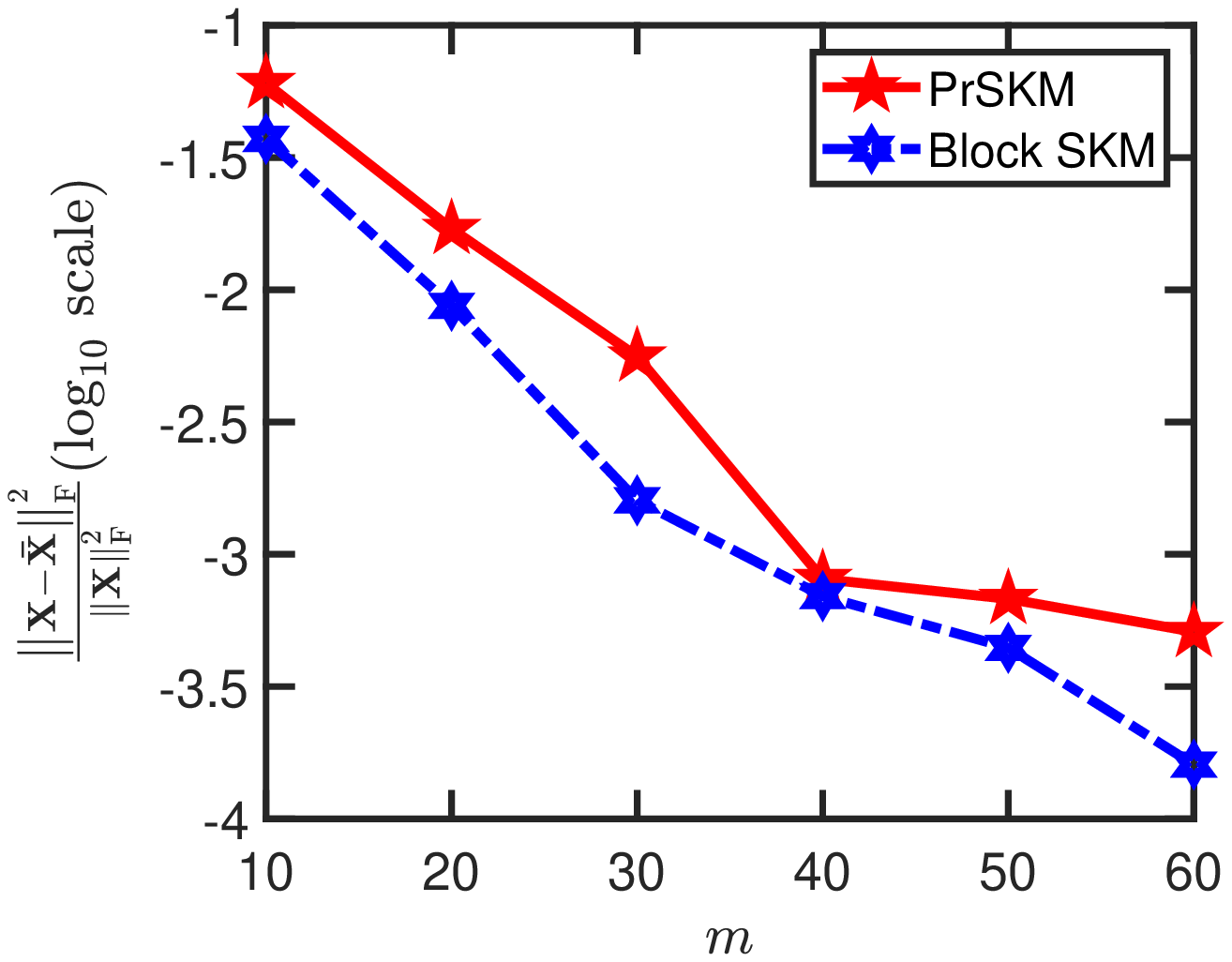}
		\caption{}
	\end{subfigure}
	\caption{Average NMSE for the Frobenius norm of error for the recovery of the matrix $\mbX$ associated with different time-varying sampling threshold sequences sizes when the PrSKM and the block SKM are utilized in ORKA: (a) $\operatorname{rank}\left(\mbX\right)=1$, (b) $\operatorname{rank}\left(\mbX\right)=4$.}
	\label{figure_2}
\end{figure*}
\begin{figure*}[t]
	\centering
	\begin{subfigure}[b]{0.45\textwidth}
		\includegraphics[width=1\linewidth]{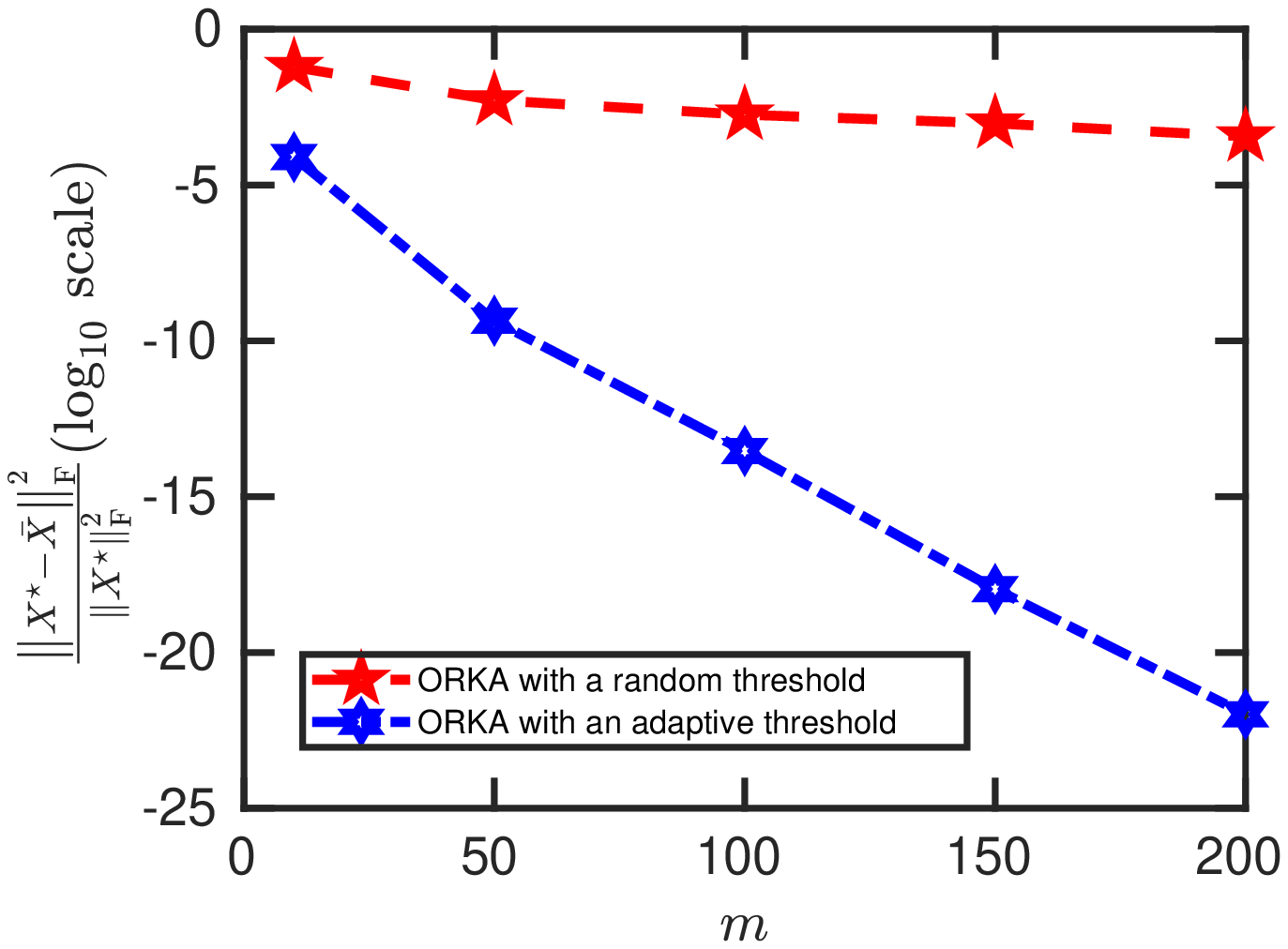}
		\caption{}
	\end{subfigure}
	\begin{subfigure}[b]{0.45\textwidth}
		\includegraphics[width=1\linewidth]{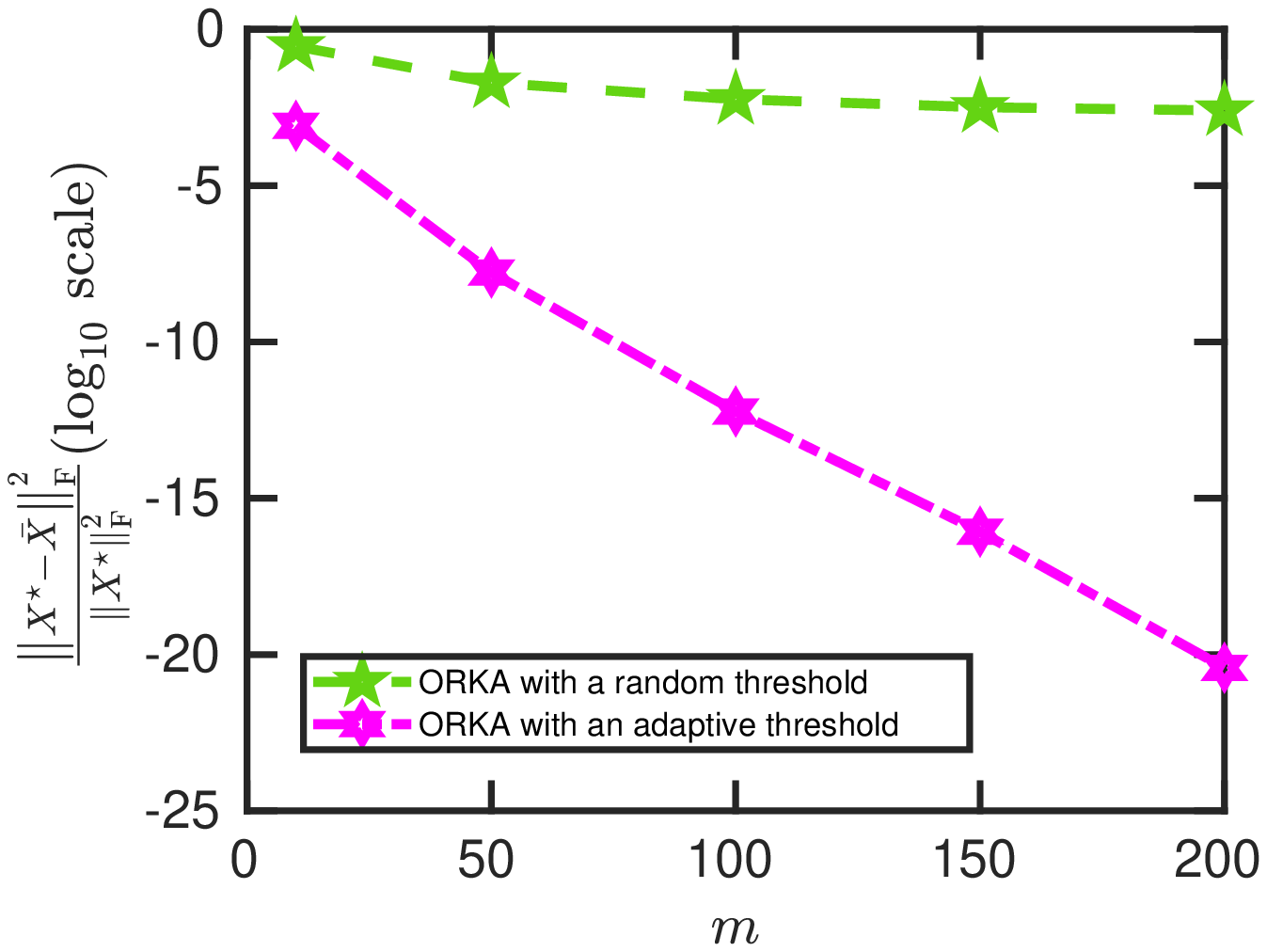}
		\caption{}
	\end{subfigure}
	\caption{Comparing the average NMSE for the Frobenius norm of error for the recovery of the matrix $\mbX$ using ORKA when (i) a random threshold and (ii) the adaptive sampling threshold are adopted when the PrSKM and the block SKM are utilized in ORKA: (a) $\operatorname{rank}\left(\mbX\right)=1$, (b) $\operatorname{rank}\left(\mbX\right)=4$.}
	\label{figure_3}
\end{figure*}
\subsection{Numerical Illustrations}
\label{NUM_matrix}
In this section, we numerically scrutinize the capability of the ORKA in the nuclear norm minimization problem (\ref{eq:1nnnnnnn}) instead of (\ref{eq:1nnnnnnn}) by the squared Frobenius norm of the error normalized by the squared Frobenius norm of the desired matrix $\mbX^{\star}$, defined as
\begin{equation}
\label{eq:4000}
\mathrm{NMSE}\triangleq\frac{\left\|\mbX^{\star}-\bar{\mbX}\right\|^{2}_{\mathrm{F}}}{\left\|\mbX^{\star}\right\|^{2}_{\mathrm{F}}}.
\end{equation}
We solve the overdetermined one-bit polyhedron in (\ref{eq:80n}) via the PrSKM and the Block SKM. To make this happen, we obtain the one-bit polyhedron from a linear feasibility problem $\mbA\mathbf{x}=\mathbf{y}$, where $\mbA\in\mathbb{R}^{200\times 25}$, $\mathbf{x}\in\mathbb{R}^{25}$ ($\mathbf{x}=\operatorname{vec}\left(\mbX\right)$ where $\mbX\in\mathbb{R}^{5\times 5}$), and $\mathbf{y}\in\mathbb{R}^{200}$. We consider the number of time-varying sampling threshold sequences to be $m\in\left\{10,20,30,40,50,60\right\}$. Each row of $\mbA$ is generated as $\mathbf{a}_{j}\sim\mathcal{N}\left(\mathbf{0},\mbI_{25}\right)$. For the desired matrix $\mbX$, we generate $\mbX=\mbK\mbK^{\top}$ where (i) $\mbK\in\mathbb{R}^{5\times 4}$ is the Gaussian matrix, and (ii) $\mbK\in\mathbb{R}^{5\times 1}$ is the Gaussian vector. Also, each time-varying sampling threshold $\boldsymbol{\uptau}^{(\ell)}$ is considered to have the distribution $\boldsymbol{\uptau}^{(\ell)}\sim\mathcal{N}\left(\mathbf{0},\mbI_{200}\right)$. Fig.~\ref{figure_2} appears to confirm the possibility of recovering the desired matrix $\mbX^{\star}$ in the one-bit polyhedron (\ref{eq:80n}) by ORKA. As expected, the performance of the recovery will be significantly enhanced as the number of time-varying sampling threshold sequences grows large. Also, similar to before, it can be seen that the Block SKM outperforms the PrSKM in the low-rank matrix recovery problem.

To improve the recovery performance, we proposed the adaptive time-varying sampling threshold in Section~\ref{ada_thresh}. Fig.~\ref{figure_3} illustrates the performance of the Block SKM in the low rank matrix recovery in the one-bit polyhedron (\ref{eq:80n}) when we have the high-dimensional input signal $\mathbf{x}\in \mathbb{R}^{128}$ and $\mbA\in\mathbb{R}^{20000\times 128}$, with (i) a random threshold, and (ii) an adaptive time-varying threshold. As can be seen, the recovery performance is significantly enhanced when the Block SKM is equipped with the adaptive time-varying threshold.

\begin{figure*}[t]
	\centering
	\begin{subfigure}[b]{0.45\textwidth}
		\includegraphics[width=1\linewidth]{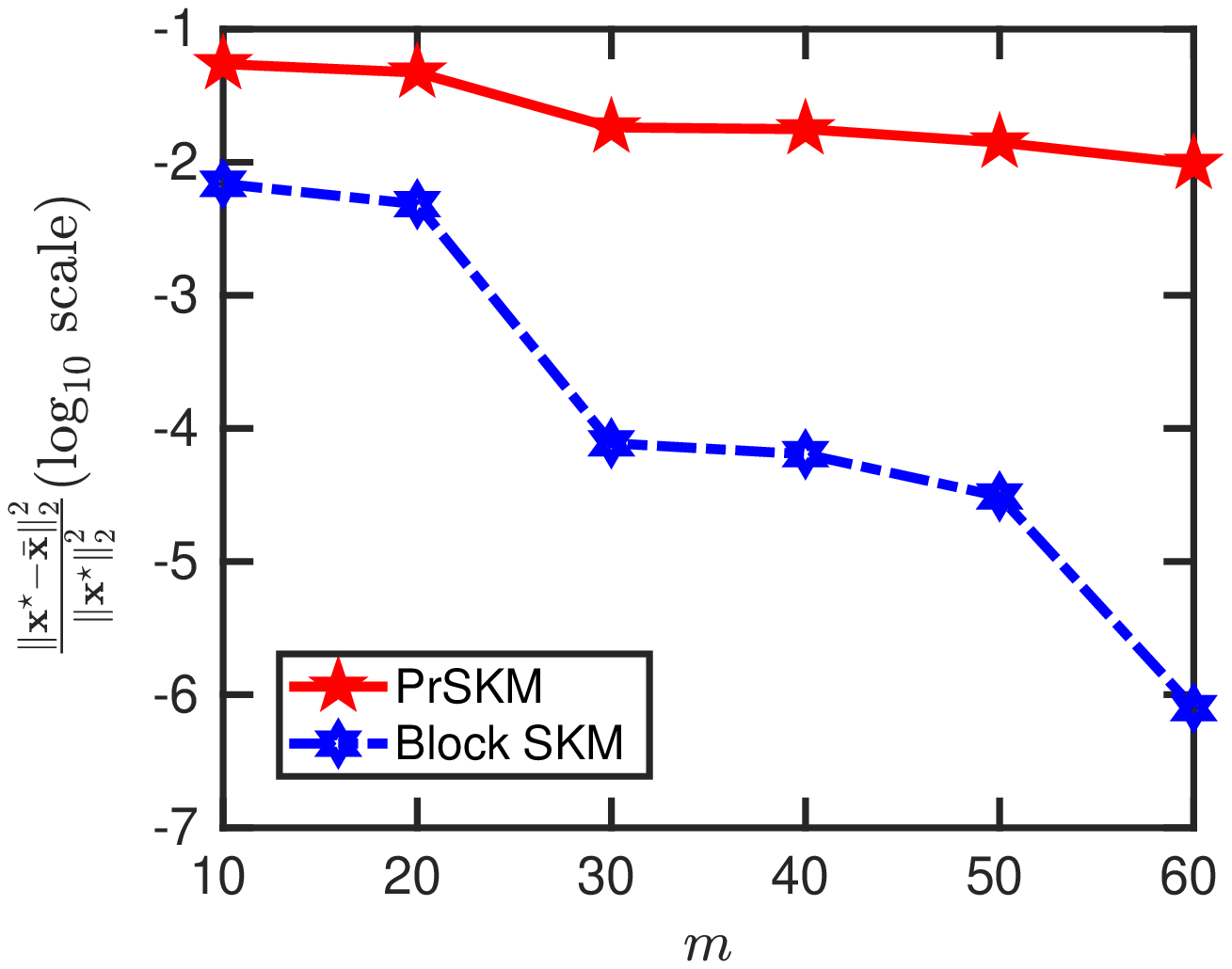}
		\caption{}
	\end{subfigure}
	\begin{subfigure}[b]{0.45\textwidth}
		\includegraphics[width=1\linewidth]{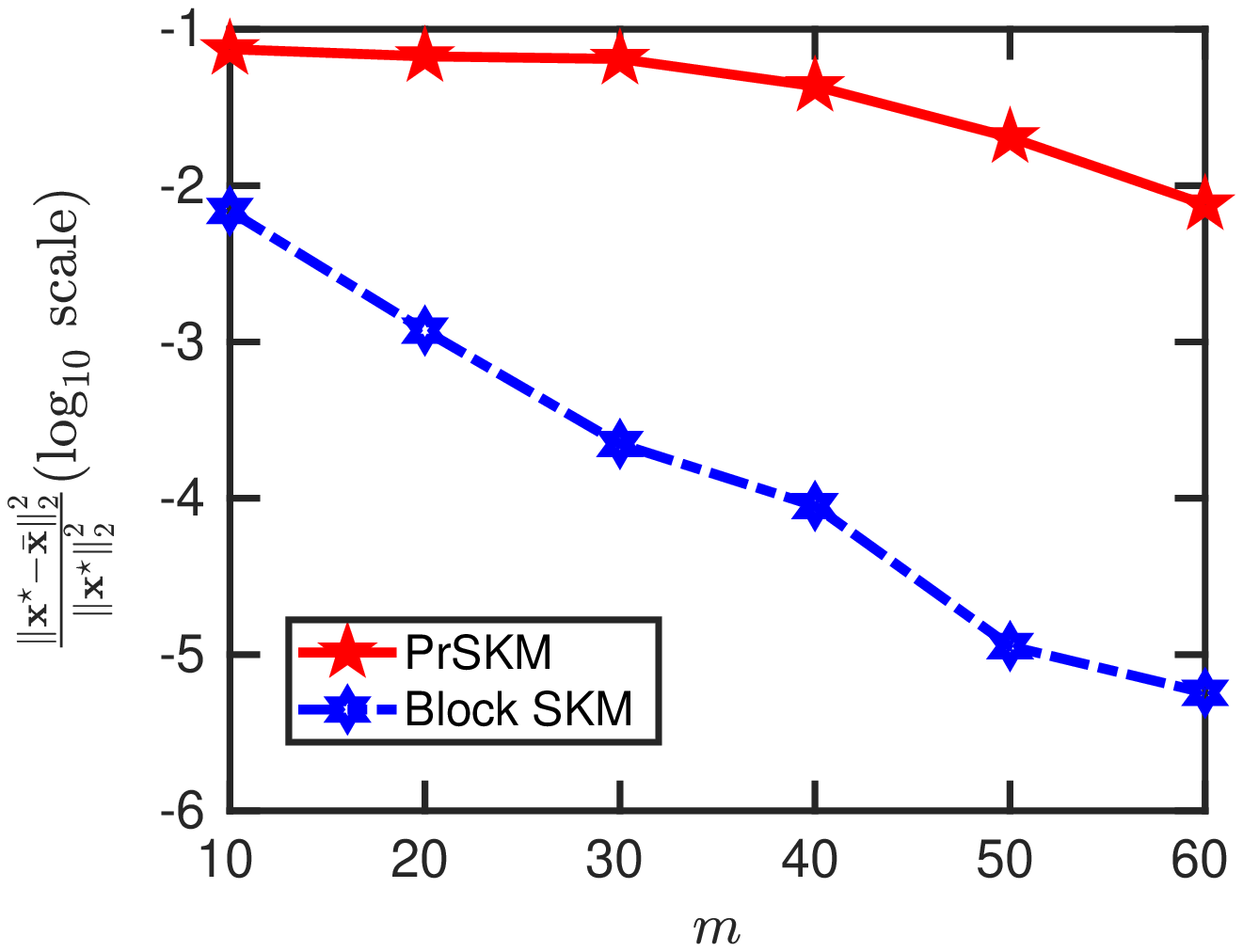}
		\caption{}
	\end{subfigure}
	\caption{Average NMSE for the error between the desired signal $\mathbf{x}^{\star}$ and its recovered version $\bar{\mathbf{x}}$ for different time-varying sampling threshold sequences sizes when the PrSKM and the block SKM are utilized in ORKA with (a) $k=2$, (b) $k=4$.}
	\label{figure_4}
\end{figure*}

\section{One-Bit Compressed Sensing:\\ 
From Optimization to Linear Feasibility}

Compressed sensing (CS) is an interesting and rapidly growing area of research that has attracted considerable attention in electrical engineering, applied mathematics, statistics, and computer science \cite{eldar2012compressed,davenport2016overview}. In CS, a sparse high-dimensional signal is to be recovered by incomplete measurements such a recovery may be formulated as \cite{eldar2012compressed}:
\begin{equation}
\label{eq:1nnnnnnnm}
\begin{aligned}
\min_{\mathbf{x}}\quad & \left\|\mathbf{x}\right\|_{1}\\
\text{s.t.} \quad &\mbA\mathbf{x}=\mathbf{y},
\end{aligned}
\end{equation}
where $\mbA\in\mathbb{R}^{m\times n}$, and $m\ll n$. One of the important applications of CS emerges in the signal recovery from a sequence of acquisitions $\{y_{i}\}$ obtained from a sparse linear transformation (wavelet transformations are known for such a property, for instance) in the magnetic resonance imaging (MRI). The reconstruction problem of the desired signal $\mathbf{x}^{\star}$ is given by
\begin{equation}
\label{eq:1nnnnnnnmm}
\begin{aligned}
\min_{\mathbf{x}}\quad & \left\|\mathbf{x}\right\|_{1}\\
\text{s.t.} \quad &\mathcal{A}_{i}\left(\mathbf{x}\right)=y_{i},\quad i\in\left\{1,\cdots,n\right\}.
\end{aligned}
\end{equation}

In this section, we first formulate the optimization problem of the \emph{one-bit compressed sensing}. Then, by taking advantage of one-bit sampling, we increase sample size in (\ref{eq:1nnnnnnnmm}) and create an  associated one-bit polyhedron.
\subsection{Problem Formulation}
Let $\boldsymbol{\uptau}$ denotes the time-varying threshold vector. The one-bit samples are generated as
\begin{equation}
\label{eq:101}
\begin{aligned}
r_{i} &= \begin{cases} +1 & \mba^{\top}_{i}\mathbf{x}\geq\uptau_{i}, \\ -1 & \mba^{\top}_{i}\mathbf{x}<\uptau_{i},
\end{cases}
\end{aligned}
\end{equation}
where $\mathcal{A}_{i}\left(\mathbf{x}\right)=\mba^{\top}_{i}\mathbf{x}$.
The occurrence probability vector $\mbp$ for the one-bit measurement $\mbr$ is given as \cite{9896984},
\begin{equation}
\label{eq:102}
\begin{aligned}
p_{i} &= \begin{cases} \Phi\left(\mba^{\top}_{i}\mathbf{x}\right) & \text{for}\quad \{r_{i}=+1\}, \\ 1-\Phi\left(\mba^{\top}_{i}\mathbf{x}\right) &  \text{for}\quad \{r_{i}=-1\},
\end{cases}
\end{aligned}
\end{equation}
where $\Phi(.)$ is the CDF of $\boldsymbol{\uptau}$. The log-likelihood function of the sign data $\mbr$ is given by
\begin{equation}
\label{eq:103}
\begin{aligned}
\mathcal{L}_{\mbr}(\bmu,\mathbf{x}) &= \sum^{m}_{i=1}\left\{\mathbb{I}_{(r_{i}=+1)}\log\left(\Phi(\mba^{\top}_{i}\mathbf{x})\right) \right.\\& \left.+\mathbb{I}_{(r_{i}=-1)}\log\left(1-\Phi(\mba^{\top}_{i}\mathbf{x})\right)\right\}.
\end{aligned}
\end{equation}
Therefore, the maximum likelihood estimation (MLE) for the one-bit compressed sensing can be written as
\begin{equation}
\label{one-cs}
\min_{\mathbf{x}} \mathcal{L}_{\mbr}(\bmu,\mathbf{x})+\lambda \|\mathbf{x}\|_{1}.
\end{equation}
The alternative formulations for one-bit compressed sensing can be found in \cite{khobahi2020model}.

Nevertheless, as discussed earlier, by deploying one-bit sampling, the opportunity exists to increase the number of samples in (\ref{eq:1nnnnnnnmm}). The one-bit compressed sensing is thus solely accomplished by creating a highly-constrained one-bit polyhedron. In other words, instead of solving an optimization problem with costly constraints, the problem may be tackled by the proposed accelerated Kaczmarz algorithms; namely, PrSKM and the block SKM.
\begin{figure*}[t]
	\centering
	\begin{subfigure}[b]{0.45\textwidth}
		\includegraphics[width=1\linewidth]{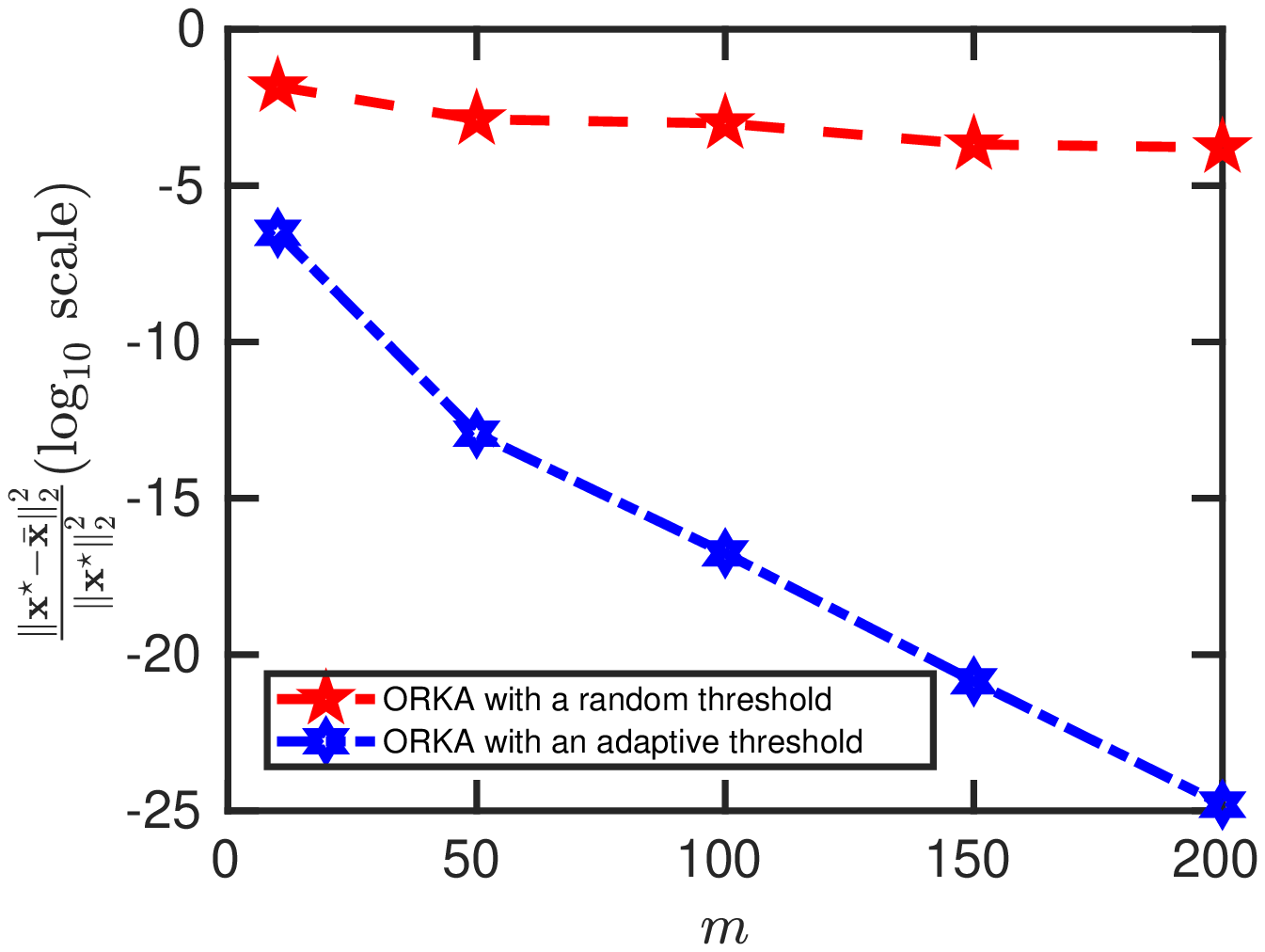}
		\caption{}
	\end{subfigure}
	\begin{subfigure}[b]{0.45\textwidth}
		\includegraphics[width=1\linewidth]{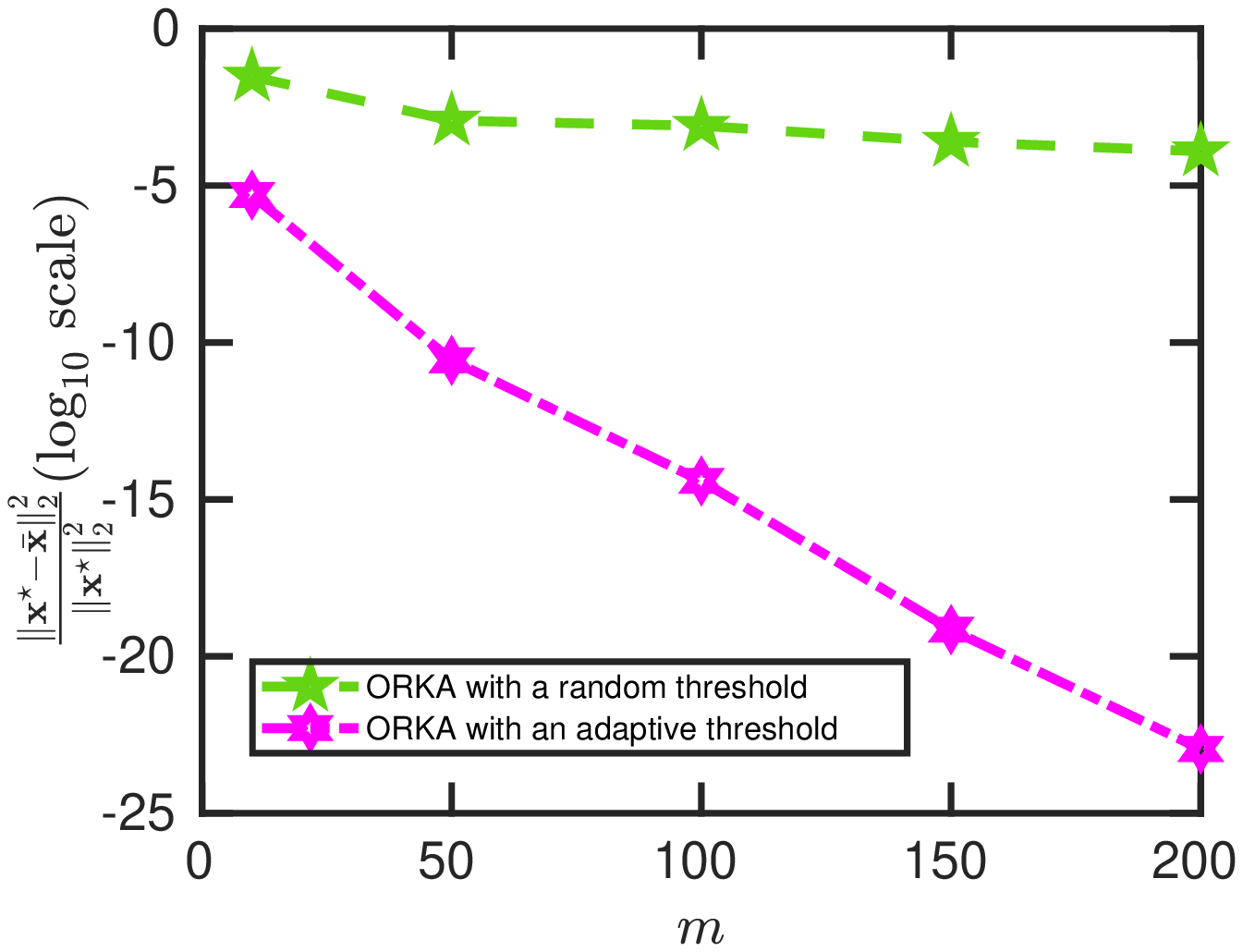}
		\caption{}
	\end{subfigure}
	\caption{Comparing the average NMSE for the desired signal $\mathbf{x}^{\star}$ and its recovered signal using ORKA when (i) a random threshold and (ii) the adaptive sampling threshold are adopted with (a) $k=20$, (b) $k=40$}
	\label{figure_5}
\end{figure*}
\subsection{Numerical results}
To examine the performance of ORKA in CS and to validate the theoretical results described in this paper, we consider signal recovery with different number of time-varying sampling threshold sequences $m\in\left\{10,20,30,40,50,60\right\}$. Input signals $\mathbf{x}^{\star}\in\mathbb{R}^{10}$ are generated with sparsity orders $k=2$ and $k=4$, respectively. The sparsity order $k$ is defined as the number of non-zero elements in a vector. Time-varying sampling thresholds and the constraint matrix $\mbA$ are generated as in Subsection~\ref{NUM_matrix}. To compare two proposed algorithms, the NMSE defined in (\ref{eq:1700000}) is utilized and the results are averaged over $15$ experiments.

As can be seen in Fig.~\ref{figure_4}, by increasing the number of time-varying sampling threshold sequences, the performance of our method is improved. Beside the possibility of increasing the number of measurements $n$, the higher number of samples are available in ORKA by comparing the measurements with multiple threshold sequences $\ell \in\left\{1,\cdots,m\right\}$. In other words, we have the opportunity to increase $n$ and $m$ simultaneously, when the number of samples is $m^{\prime}=m n$.

Same as Subsection~\ref{NUM_matrix}, the adaptive thresholding algorithm is applied to ORKA for the high-dimensional input signal $\mathbf{x}\in\mathbb{R}^{128}$ in order to enhance its recovery performance, whose outcome is presented in Fig.~\ref{figure_5}. The NMSE results are reported with sparsity orders $k=20$ and $k=40$.

\begin{table} [t]
\caption{Comparing CPU times and $\operatorname{NMSE}$ of ORKA and $\ell^{1}$-minimization.}
\centering
\begin{tabular}{ | c | c | c | c |}
\hline
\text {Algorithm} & \text {$m^{\star}$} & \text {CPU time (s)} & \text {$\operatorname{NMSE}$} \\[0.5 ex]
\hline \hline
\text{ORKA} & $500$ &  $3.1240e-04$ & $3.2052e-12$ \\[1 ex]
\hline
\text{$\ell^{1}$} & $100$ &  $0.0071$ & $2.4572e-11$ \\[1 ex]
\hline
\end{tabular}
\label{table_1}
\end{table}
To further investigate the efficacy of ORKA in CS, we compare our proposed approach with the well-known $\ell^{1}$-minimization approach formulated in (\ref{eq:1nnnnnnnmm}) in terms of NMSE and CPU time. As presented in Table~\ref{table_1}, ORKA outperforms $\ell^{1}$-minimization in terms of both NMSE and CPU time. The results are obtained for $\mathbf{x}\in\mathbb{R}^{128}$ when the optimal number of samples are utilized, and where $m^{\star}=4k\log(n/k)$ and $m^{\star}=500$ are considered for the high-resolution method and ORKA, respectively. Herein, optimality of sample sizes means that the number of samples utilized by algorithms leads to their best performance, i.e. satisfying the criterion $\left\|\mathbf{x}_{i}-\mathbf{x}^{\star}\right\|_{2}^{2}\leq 5\times 10^{-11}\left\|\mathbf{x}^{\star}\right\|_{2}^{2}$. By this comparison, we remove the burden of the large number of samples from the $\ell^{1}$-minimization to fairly compare their optimal shape deploying incomplete measurements with that of ORKA.

It is worth pointing out that for a $64$-bit ADC, $m=100$ corresponds to $6400$ bits of information while ORKA solely employs $500$ bits. Therefore, it appears from Table~\ref{table_1} that ORKA achieves a better accuracy in terms of NMSE with not only fewer information bits but also a smaller computational cost.

\section{Conclusion}
We proposed a novel algorithm, ORKA, that takes advantage of the abundant number of samples available in one-bit sampling with time-varying thresholds to efficiently and globally solve some well-studied problems in the form of (\ref{eq:1nnnn}); including low-rank matrix recovery and compressed sensing. Moreover, two state-of-the-art randomized Kaczmarz algorithms are proposed to use in ORKA to find the desired signal inside the emerging confined feasible regions, named the one-bit polyhedron, with an enhanced convergence rate. The numerical results showcased the effectiveness of the proposed approaches for the low-rank matrix recovery and compressed sensing problems.

\bibliographystyle{IEEEbib}
\bibliography{strings,refs}

\end{document}